\documentclass{article}

\PassOptionsToPackage{numbers, compress}{natbib}



\usepackage[final]{neurips_2022}


\usepackage{lipsum}
\usepackage{bm}
\usepackage{wrapfig}
\usepackage{cancel}
\usepackage[utf8]{inputenc} 
\usepackage[T1]{fontenc}    
\usepackage[colorlinks=true, citecolor={purple}]{hyperref}       
\usepackage{url}            
\usepackage{booktabs}       
\usepackage{amsfonts}       
\usepackage{nicefrac}       
\usepackage{microtype}      
\usepackage{arydshln} 
\usepackage{graphicx}
\usepackage{wrapfig}
\usepackage{media9}
\usepackage{amsmath}
\usepackage{amsthm}
\usepackage{multirow}
\usepackage[dvipsnames,table]{xcolor}
\usepackage[caption=false]{subfig}
\usepackage{soul}
\usepackage{hyperref}
\usepackage{amsmath}
\DeclareMathOperator*{\argmax}{arg\,max}
\DeclareMathOperator*{\argmin}{arg\,min}
\usepackage{xspace}
\usepackage{adjustbox}
\usepackage{array}
\usepackage{enumitem}
\usepackage{comment}
\usepackage[font=small]{caption}
\usepackage{mathtools}
\usepackage{mathrsfs}
\usepackage{makecell}
\usepackage{array}
\usepackage{diagbox}
\usepackage{wrapfig}

\usepackage{algorithm}
\usepackage{algpseudocode}

\usepackage{listings}
\usepackage[dvipsnames]{xcolor}

\newcommand{\andrew}[1]{}




\newtheorem{proposition}{Proposition}
\newtheorem{definition}{Definition}
\newtheorem{lemma}{Lemma}

\definecolor{lightblue}{RGB}{230,250,255}


\title{
Lending Interaction Wings to Recommender Systems\\ with Conversational Agents
}

\author{%
	Jiarui Jin$^{1,\dag}$, Xianyu Chen$^1$, Fanghua Ye$^2$, Mengyue Yang$^2$, Yue Feng$^2$,\\ \textbf{Weinan Zhang$^1$, Yong Yu$^1$, Jun Wang$^2$}\\
	$^1$Shanghai Jiao Tong University, $^2$University College London	
}

\begin{document}
	\maketitle

	\begin{abstract}
        Recommender systems trained on \emph{offline} historical user behaviors are embracing conversational techniques to \emph{online} query user preference.
        Unlike prior conversational recommendation approaches that systemically combine conversational and recommender parts through a reinforcement learning framework, we propose \texttt{CORE}, a new \emph{offline-training and online-checking} paradigm that bridges a \underline{\texttt{CO}}nversational agent and \underline{\texttt{RE}}commender systems via a unified \emph{uncertainty minimization} framework.
        It can benefit \emph{any} recommendation platform in a plug-and-play style.
        Here, \texttt{CORE} treats a recommender system as an \emph{offline relevance score estimator} to produce an estimated relevance score for each item; while a conversational agent is regarded as an \emph{online relevance score checker} to check these estimated scores in each session.
        We define \emph{uncertainty} as the summation of \emph{unchecked} relevance scores.
        In this regard, the conversational agent acts to minimize uncertainty via querying either \emph{attributes} or \emph{items}.
        Based on the uncertainty minimization framework, we derive the \emph{expected certainty gain} of querying each attribute and item, and develop a novel \emph{online decision tree} algorithm to decide what to query at each turn.
        We reveal that \texttt{CORE} can be extended to query attribute values, and
        we establish a new Human-AI recommendation simulator supporting both open questions of querying attributes and closed questions of querying attribute values.
        Experimental results on 8 industrial datasets show that \texttt{CORE} could be seamlessly employed on 9 popular recommendation approaches, and can consistently bring significant improvements, compared against either recently proposed reinforcement learning-based or classical statistical methods, in both hot-start and cold-start recommendation settings.
        We further demonstrate that our conversational agent could communicate as a human if empowered by a pre-trained large language model, e.g., \texttt{gpt-3.5-turbo}.
	\end{abstract}

{
	\renewcommand{\thefootnote}{\fnsymbol{footnote}}
	\footnotetext[2]{Work done during Jiarui's visit at University College London.}
}
	
\section{Introduction}
\label{sec:intro}
Recommender systems are powerful tools to facilitate users' information seeking \citep{koren2009matrix,rendle2010factorization,cheng2016wide,guo2017deepfm,he2017neural,he2020lightgcn,zhou2018deep,zhou2019deep}; however, most prior works solely leverage offline historical data to build a recommender system.
The inherent limitation of these recommendation approaches lies in their offline focus on users' historical interests, which would not always align with users' present needs.
As intelligent conversational assistants (a.k.a., chat-bots) such as ChatGPT and Amazon Alexa, have entered the daily life of users, these conversational techniques bring an unprecedented opportunity to online obtain users' current preferences via conversations.
This possibility has been envisioned as conversational recommender systems and has inspired a series of conversational recommendation methods \citep{jannach2021survey,lei2020estimation,sun2018conversational}. 
Unfortunately, all of these approaches try to model the interactions between users and systems using a reinforcement learning-based framework, which inevitably suffers from data insufficiency and deployment difficulty, because most recommendation platforms are based on supervised learning.

In this paper, we propose \texttt{CORE} that can bridge a \texttt{CO}nversational agent and \texttt{RE}commender systems in a plug-and-play style.
In our setting, a conversational agent can choose either to query (a.k.a., to recommend) an item (e.g., \texttt{Hotel} \texttt{A}) or to query an attribute (e.g., \texttt{Hotel} \texttt{Level}), and the user should provide their corresponding preference.
Here, the goal of the conversational agent is to find (a.k.a., to query) an item that satisfies the user, with a minimal number of interactions.

\begin{figure}
    \centering
    \vspace{-3mm}
    \captionsetup{font=footnotesize}
    \includegraphics[width=\textwidth]{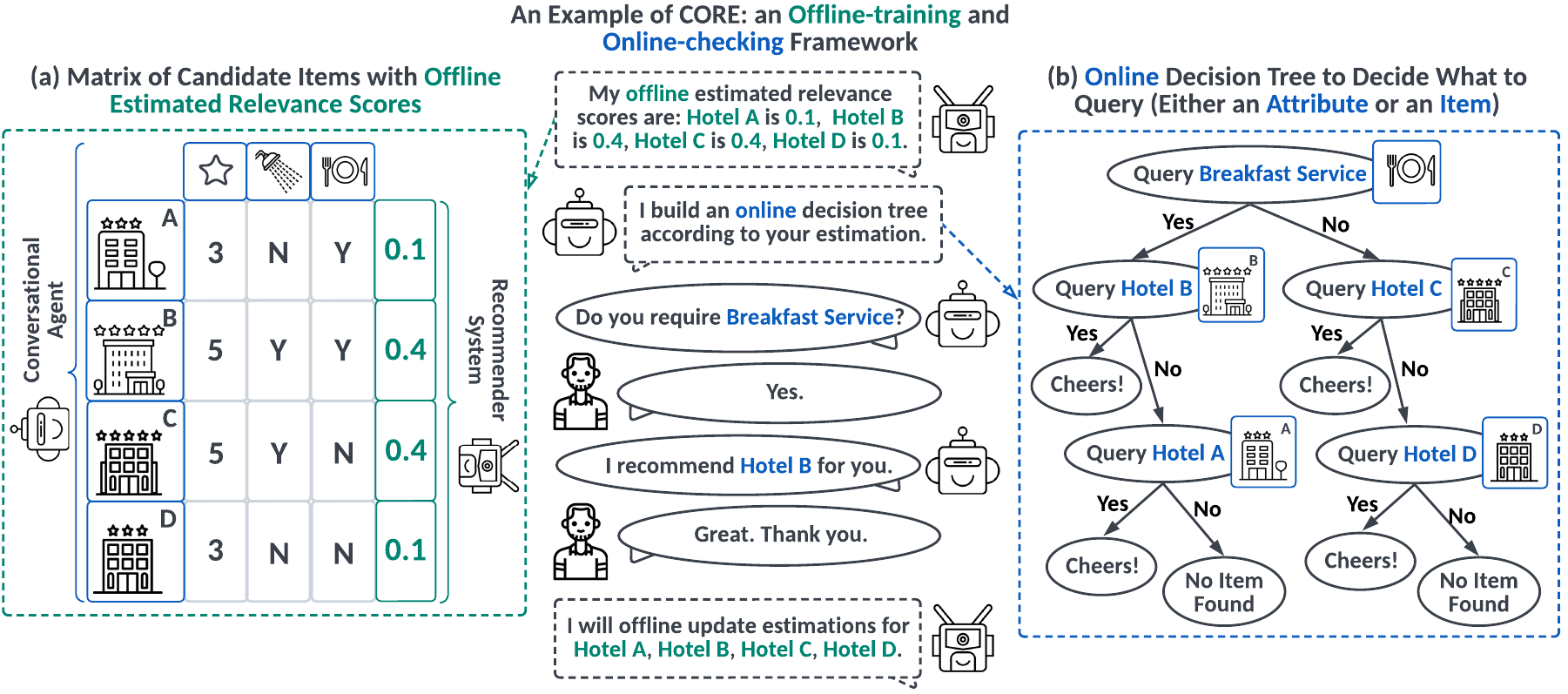}
    \vspace{-5mm}
    \caption{
        An illustrated example of \texttt{CORE}, an \emph{offline-training and online-checking} framework, where a recommender system operates as \emph{an offline relevance score estimator} (colored in green), while a conversational agent acts as \emph{an online relevance score checker} (colored in blue).
        Concretely, given a matrix of candidate items, as shown in (a), the recommender system could \emph{offline} assign an estimated relevance score to each item, and then the conversational agent would \emph{online} check these scores by querying either items or attributes, depicted in (b).
    }
    \label{fig:overview}
    \vspace{-5mm}
\end{figure}

We formulate the cooperation between a conversational agent and a recommender system into a novel \emph{offline-training and online-checking} framework.
Specifically,
\texttt{CORE} treats a recommender system as an \emph{offline relevance score estimator} that offline assigns a relevance score to each item, while a conversational agent is regarded as an \emph{online relevance score checker} that online checks  whether these estimated relevance scores could reflect the relevance between items and the user's current needs. 
Here, ``checked items'' means those items, we can certainly say that they can not satisfy the user according to already queried items and attributes.
We introduce a new \emph{uncertainty} metric defined as the summation of estimated relevance scores of those unchecked items.
Then, the goal of our conversational agent can be formulated as minimizing the uncertainty via querying items or attributes during the interactions.  
To this end, we derive \emph{expected certainty gain} to measure the expectation of uncertainty reduction by querying each item and attribute.
Then, during each interaction, our conversational agent selects an item or an attribute with the maximum certainty gain, resulting in \emph{an online decision tree} algorithm.
We exemplify the above process in Figure~\ref{fig:overview}.

Notice that users usually do not hold a clear picture of their preferences on some attributes (i.e., attribute IDs), e.g., what \texttt{Hotel} \texttt{Level} they need, instead, they could have a clear preference on a specific value of an attribute (i.e., attribute value), e.g., \texttt{Hotel} \texttt{Level}=\texttt{5} is too expensive for a student user.
Also, asking an open question of querying attributes could result in an unexpected answer, e.g., a user answers \texttt{3.5} to \texttt{Hotel} \texttt{Level}.
In this regard, querying attribute values leading to closed questions (i.e., \texttt{Yes} or \texttt{No} questions) could be a better choice.
We reveal that \texttt{CORE} could be directly applied to the above querying strategies.
We also develop a new Human-AI recommendation simulator that supports both querying attributes and attribute values.

In practice, we extend \texttt{CORE} to handle continuous attributes and to consider the dependence among attributes.
Moreover, we demonstrate that  our conversational agent could  straightforwardly be empowered by a pre-trained language model, e.g., \texttt{gpt-3.5-turbo}, to communicate as a human. 
Note that \texttt{CORE} poses no constraint on recommender systems, only requiring the estimated relevance scores.
Therefore, \texttt{CORE} can be seamlessly applied to \emph{any} recommendation platform.
We conduct experiments on 8 industrial datasets (including both tabular data, sequential behavioral data and graph-structured data) with 9 popular recommendation approaches (e.g., DeepFM \citep{guo2017deepfm}, DIN \citep{zhou2018deep}).
Experimental results show that \texttt{CORE} can bring significant improvements in both hot-start recommendation (i.e., the recommender system is offline trained) and cold-start recommendation (i.e., the recommender system is not trained) settings.
We compare \texttt{CORE} against recently proposed reinforcement learning based methods and classical statistical methods, and \texttt{CORE} could consistently show better performance.   

\section{Bridging Conversational Agents and Recommender Systems} 
\subsection{Problem Formulation}
\label{sec:problem}
Let $\mathcal{U}$ denote a set of users, $\mathcal{V}=\{v_1,\ldots,v_M\}$ be a set of $M$ items, $\mathcal{X}=\{x_1,\ldots,x_N\}$ be a set of $N$ attributes (a.k.a., features) of items.
We consider a recommender system as a mapping function, denoted as $\Psi_\mathtt{RE}:\mathcal{U}\times\mathcal{V}\rightarrow \mathbb{R}$ that assigns an estimated relevance score to each item regarding a user.
Then, during each online session, a conversational agent also can be formulated as a mapping function, denoted as  $\Psi_\mathtt{CO}:\mathcal{U}\times\mathcal{A}\rightarrow \mathbb{R}$ that chooses either an item or an attribute to query user, and $\mathcal{A}=\mathcal{V}\cup\mathcal{X}$ denotes the action space of the conversational agent.
For convenience, we call the items satisfying the user in each session as
\emph{target items}. 
Our goal is to find \emph{one} target item in the session.

For this purpose, $\Psi_\mathtt{RE}(\cdot)$ acts as an \emph{offline estimator} to produce an estimated relevance distribution 
for each user through offline training on previous behavioral data; while $\Psi_\mathtt{CO}(\cdot)$ operates as an \emph{online checker} to check whether these estimated scores fit the user's current needs (i.e., an oracle relevance distribution)
through online interactions. 
Here, ``checked items'' denote those items that can not be target items according to queried items and attributes.  
For example, as Figure~\ref{fig:overview} illustrates, after querying \texttt{Breakfast} \texttt{Service}, we have items \texttt{Hotel} \texttt{C} and \texttt{Hotel} \texttt{D} checked.
We introduce \emph{uncertainty} as the summation of estimated relevance scores of unchecked items.
Formally, we have:
\begin{definition}
[\textbf{Uncertainty and Certainty Gain}]
\label{def:uncertainty}
For the $k$-th turn, we define uncertainty, denoted as $\mathtt{U}_k$, to measure how many estimated relevance scores are still unchecked, i.e., 
\begin{equation}
\label{eqn:entropy}
\mathtt{U}_k \coloneqq \mathtt{SUM}(\{\Psi_\mathtt{RE}(v_m)|v_m\in\mathcal{V}_{k}\}),
\end{equation}
where $\Psi_\mathtt{RE}(v_m)$\footnote{In this paper, as our conversational agent only faces one user $u\in\mathcal{U}$ in each session, we omit the input $u$ in mapping functions $\Psi_\cdot(\cdot)$s for simplicity.} outputs the estimated relevance score for item $v_m$, $\mathcal{V}_k$ is the set of all the unchecked items after $k$ interactions, and $\mathcal{V}_k$ is initialized as $\mathcal{V}_0=\mathcal{V}$.
Then, the certainty gain of $k$-th interaction is defined as $\Delta \mathtt{U}_k \coloneqq \mathtt{U}_{k-1}-\mathtt{U}_{k}$, i.e., how many relevance scores are checked at the $k$-th turn.
Since our goal is to find a target item, if $\Psi_\mathtt{CO}(\cdot)$ successfully finds one at the $k$-th turn, then we set all the items checked, namely $\mathtt{U}_k=0$.  
\end{definition}
In this regard, our conversational agent is to minimize $\mathtt{U}_k$ at each $k$-th turn via removing those checked items from $\mathcal{V}_k$.
Considering that online updating $\Psi_\mathtt{RE}(\cdot)$ is infeasible in practice due to the high latency and computation costs, the objective of $\Psi_\mathtt{CO}(\cdot)$ can be expressed as:
\begin{equation}
\min_{\Psi^*_\mathtt{RE}} K, \text{ s.t., } \mathtt{U}_K=0,
\end{equation}
where $K$ is the number of turns, and 
$\Psi^*_\mathtt{RE}$ means that the recommender system is frozen.
To this end, the design of our conversational agent could be organized as an uncertainty minimization problem.

\subsection{Comparisons to Previous Work}
Bridging conversational techniques and recommender systems has become an appealing solution to model the dynamic preference and weak explainability problems in recommendation task  
\citep{gao2021advances,jannach2021survey}, where
the core sub-task is to dynamically select attributes to query and make recommendations upon the corresponding answers.
Along this line, the main branch of previous studies is to combine the conversational models and the recommendation models from a systematic perspective.
Namely conversational and recommender models are treated and learned as two individual modules \citep{lei2020estimation,bi2019conversational,sun2018conversational,zhang2020conversational} in the system.
The system is developed from a single-turn conversational recommender system \citep{christakopoulou2018q,christakopoulou2016towards,yu2019visual} to a multiple-turn one \citep{zhang2018towards}. 
To decide when to query attributes and when to make recommendations (i.e., query items), recent papers \citep{lei2020estimation,li2018towards} develop reinforcement learning-based solutions, which are innately suffering from insufficient usage of labeled data and high complexity costs of deployment.

Different from the conversational components introduced in \citep{sun2018conversational,lei2020estimation}, our conversational agent can be regarded as a generalist agent that can query either items or attributes. In addition, our querying strategy is derived based on the uncertainty minimization framework, which only requires estimated relevance scores from the recommender system.
Hence, \texttt{CORE} can be straightforwardly applied to \emph{any} supervised learning-based recommendation platform, in a plug-and-play way.

We present the connections to other previous work (e.g., decision tree algorithms) in Appendix~\ref{app:related}.


\section{Making the Conversational Agent a Good Uncertainty Optimizer}
\label{sec:con}
\subsection{Building an Online Decision Tree}
As described in Section~\ref{sec:problem}, the aim of our conversational agent is to effectively reduce uncertainty via querying either items or attributes.
The core challenge is how to decide which item or attribute to query.
To this end, we begin by introducing \emph{expected certainty gain} to measure the expectation of how much uncertainty could be eliminated by querying each item and each attribute.
Then, we can choose an item or an attribute with the maximum expected certainty gain to query.

Formally, let $\mathcal{X}_k$ denote the set of unchecked attributes after $k$ interactions.
Then, for each $k$-th turn, we define $a_\mathtt{query}$ as an item or an attribute to query, which is computed following:
\begin{equation}
\label{eqn:action}
a_\mathtt{query}= \argmax_{a\in\mathcal{V}_{k-1}\cup\mathcal{X}_{k-1}} \Psi_\mathtt{CG}(\mathtt{query} (a)),
\end{equation}
where $\Psi_\mathtt{CG}(\cdot)$ denotes the \emph{expected certainty gain} of querying $a$, and $a$ can be either an unchecked item (from $\mathcal{V}_{k-1}$) or an unchecked attribute (from $\mathcal{X}_{k-1}$).

\textbf{$\Psi_\mathtt{CG}(\cdot)$ for Querying an Item.}
We first consider the case where $a\in\mathcal{V}_{k-1}$.
Let $\mathcal{V}^*$ denote the set of all the target items in the session.
Since we only need to find \emph{one} target item, therefore, if $a\in\mathcal{V}_{k-1}$, we can derive:
\begin{equation}
\label{eqn:ceitem}
\begin{aligned}
\Psi_\mathtt{CG}(\mathtt{query}(a))
&=\Psi_\mathtt{CG}(a\in\mathcal{V}^*) \cdot\mathtt{Pr}(a\in\mathcal{V}^*) + \Psi_\mathtt{CG}(a\notin\mathcal{V}^*)\cdot \mathtt{Pr}(a\notin \mathcal{V}^*)\\
&=\Big(\sum_{v_m\in\mathcal{V}_{k-1}}\Psi_\mathtt{RE}(v_m) \Big)\cdot \mathtt{Pr}(a\in\mathcal{V}^*)+\Psi_\mathtt{RE}(a)\cdot \mathtt{Pr}(a\notin \mathcal{V}^*),
\end{aligned}
\end{equation}
where $a\in\mathcal{V}^*$ and $a\notin \mathcal{V}^*$ denote that queried $a$ is a target item and not.
If $a\in\mathcal{V}^*$, the session is done, and therefore, the certainty gain (i.e., $\Psi_\mathtt{CG}(a\in\mathcal{V}^*)$) is the summation of all the relevance scores in $\mathcal{V}_{k-1}$.
Otherwise, only $a$ is checked, and the certainty gain (i.e., $\Psi_\mathtt{CG}(a\notin\mathcal{V}^*)$) is the relevance score of $a$, and we have $\mathcal{V}_k= \mathcal{V}_{k-1}\backslash \{a\}$ and $\mathcal{X}_k= \mathcal{X}_{k-1}$. 

Considering that $a$ being a target item means $a$ being a relevant item, we leverage the user's previous behaviors to estimate the user's current preference.
With relevance scores estimated by $\Psi_\mathtt{RE}(\cdot)$, we estimate $\mathtt{Pr}(a\in\mathcal{V}^*)$ as:
\begin{equation}
\mathtt{Pr}(a\in\mathcal{V}^*)=\frac{\Psi_\mathtt{RE}(a)}{\mathtt{SUM}(\{\Psi_\mathtt{RE}(v_m)|v_m\in\mathcal{V}_{k-1}\})},
\end{equation}
and $\mathtt{Pr}(a\notin \mathcal{V}^*)=1- \mathtt{Pr}(a\in\mathcal{V}^*)$.

\textbf{$\Psi_\mathtt{CG}(\cdot)$ for Querying an Attribute.}
We then consider the case where $a\in\mathcal{X}_{k-1}$.
For each queried attribute $a$, let $\mathcal{W}_a$ denote the set of all the candidate attribute values, and let $w_a^*\in\mathcal{W}_a$ denote the user preference on $a$, e.g., $a$ is \texttt{Hotel} \texttt{Level}, $w^*_a$ is $3$. 
Then, if $a\in\mathcal{X}_{k-1}$, we have:
\begin{equation}
\label{eqn:ceattribute1}
\Psi_\mathtt{CG}(\mathtt{query}(a))=\sum_{w_a\in\mathcal{W}_a}\Big(
\Psi_\mathtt{CG}(w_a=w_a^*)\cdot \mathtt{Pr}(w_a=w^*_a)\Big),
\end{equation}
where $w_a=w^*_a$ means that when querying $a$, the user's answer (represented by $w^*_a$) is $w_a$, $\Psi_\mathtt{CG}(w_a=w_a^*)$ is the certainty gain when $w_a=w^*_a$ happens, and $\mathtt{Pr}(w_a=w^*_a)$ is the probability of $w_a=w^*_a$ occurring.
If $w_a=w^*_a$ holds, then all the unchecked items whose value of $a$ is not equal to $w_a$ should be removed from $\mathcal{V}_{k-1}$, as they are certainly not satisfying the user's needs.

Formally, let $\mathcal{V}_{a_\mathtt{value}=w_a}$ denote the set of all the items whose value of $a$ is equal to $w_a$, and let $\mathcal{V}_{a_\mathtt{value}\neq w_a}$ denote the set of rest items.
Then, $\Psi_\mathtt{CG}(w_a=w_a^*)$ can be computed as:
\begin{equation}
\label{eqn:ceattribute2}
\Psi_\mathtt{CG}(w_a=w_a^*) = \mathtt{SUM}(\{\Psi_\mathtt{RE}(v_m)|v_m\in\mathcal{V}_{k-1}\cap\mathcal{V}_{a_\mathtt{value}\neq w_a}\}),
\end{equation}
which indicates that the certainty gain, when $w_a$ is the user's answer, is the summation of relevance scores of those items not matching the user preference.

To finish $\Psi_\mathtt{CG}(\mathtt{query}(a))$, we also need to estimate $\mathtt{Pr}(w_a=w^*_a)$.
To estimate the user preference on attribute $a$, we leverage the estimated relevance scores given by $\Psi_\mathtt{RE}(\cdot)$ as:
\begin{equation}
\label{eqn:ceattribute3}
\mathtt{Pr}(w_a=w^*_a) = \frac{\mathtt{SUM}(\{\Psi_\mathtt{RE}(v_{m})|v_{m}\in\mathcal{V}_{k-1}\cap\mathcal{V}_{a_\mathtt{value}=w_a}\})}{\mathtt{SUM}(\{\Psi_\mathtt{RE}(v_m)|v_m\in\mathcal{V}_{k-1}\})}.
\end{equation}
In this case, we remove $\mathcal{V}_{k-1}\cap\mathcal{V}_{a_\mathtt{value}\neq w^*_a}$ from $\mathcal{V}_{k-1}$, namely we have
$\mathcal{V}_{k}=\mathcal{V}_{k-1} \setminus \mathcal{V}_{a_\mathtt{value}\neq w^*_a}$.
As attribute $a$ is checked, we have $\mathcal{X}_k=\mathcal{X}_{k-1} \backslash \{a\}$.
Here, $w^*_a$ is provided by the user after querying $a$.

By combining Eqs.~(\ref{eqn:ceitem}), (\ref{eqn:ceattribute1}), and (\ref{eqn:ceattribute2}), we can derive a completed form of $\Psi_\mathtt{CG}(\mathtt{query}(a))$ for $a\in\mathcal{V}_{k-1}\cup\mathcal{X}_{k-1}$ (See Appendix~\ref{app:theory} for details).
Then, at each $k$-th turn, we can always follow Eq.~(\ref{eqn:action}) to obtain the next query $a_\mathtt{query}$.
As depicted in Figure~\ref{fig:overview}(b), the above process results in an online decision tree, where the nodes in each layer are items and attributes to query, and the depth of the tree is the number of turns
(see Appendix~\ref{app:visulation} for visualization of a real-world case).

\subsection{From Querying Attributes to Querying Attribute Values}
\label{sec:attributevalue}
We note that the online decision tree introduced above is a general framework; while applying it to real-world scenarios, there should be some specific designs.  

\textbf{$\Psi_\mathtt{CG}(\cdot)$ for Querying an Attribute Value.}
One implicit assumption in the above online decision tree is that the user's preference on queried attribute $a$ always falls into the set of attribute values, namely $w^*_a\in\mathcal{W}_a$ holds.
However, it can not always hold, due to (i) a user would not have a clear picture of an attribute, (ii) a user's answer would be different from all the candidate attribute values, e.g., $a$ is \texttt{Hotel} \texttt{Level}, $w^*_a=3.5$, and $\mathcal{W}_a=\{3, 5\}$, as shown in Figure~\ref{fig:overview}(a).
In these cases, querying attributes would not be a good choice.
Hence, we propose to query attribute values instead of attribute IDs, because (i) a user is likely to hold a clear preference for a specific value of an attribute, e.g., a user would not know an actual \texttt{Hotel} \texttt{Level} of her favoring hotels, but she clearly knows she can not afford a hotel with \texttt{Hotel} \texttt{Level}=\texttt{5}, and (ii) since querying attribute values leads to closed questions instead of open questions, a user only needs to answer \texttt{Yes} or \texttt{No}, therefore, avoiding the user's answer to be out of the scope of all the candidate attribute values.

Formally, in this case, $\mathcal{A}=\mathcal{W}_x\times \mathcal{X}_{k-1}$ which indicates we need to choose a value $w_x\in\mathcal{W}_x$ where $x\in\mathcal{X}_{k-1}$. 
In light of this, we compute the expected certainty gain of querying attribute value $w_x$ as:
\begin{equation}
\label{eqn:attributevalue}
\Psi_\mathtt{CG}(\mathtt{query}(x)=w_x) = \Psi_\mathtt{CG}(w_x=w^*_x)\cdot \mathtt{Pr}(w_x=w^*_x) + \Psi_\mathtt{CG}(w_x\neq w^*_x) \cdot \mathtt{Pr}(w_x\neq w^*_x),
\end{equation}
where $w_x^*\in\mathcal{W}_x$ denotes the user preference on attribute $x$.
Here, different from querying attributes, a user would only respond with \texttt{Yes} (i.e., $w_x=w^*_x$) or \texttt{No} (i.e., $w_x\neq w^*_x$).
Therefore, we only need to estimate the certainty gain for the above two cases.
$\Psi_\mathtt{CG}(w_x=w^*_x)$ can be computed following Eq.~(\ref{eqn:ceattribute2}) and $\Psi_\mathtt{CG}(w_x\neq w^*_x)$ can be calculated by replacing $\mathcal{V}_{x_\mathtt{value}\neq w_x}$ with $\mathcal{V}_{x_\mathtt{value}= w_x}$. 
$\mathtt{Pr}(w_x=w^*_x)$ is estimated in Eq.~(\ref{eqn:ceattribute3}) and $\mathtt{Pr}(w_x\neq w^*_x)=1-\mathtt{Pr}(w_x=w^*_x)$.
In this case, if all the values of $x$ have been checked, we have $\mathcal{X}_k=\mathcal{X}_{k-1}\backslash\{x\}$; otherwise, $\mathcal{X}_k=\mathcal{X}_{k-1}$; and $\mathcal{V}_k=\mathcal{V}_{k-1} \setminus \mathcal{V}_{x_\mathtt{value}\neq w_x}$ if receiving \texttt{Yes} from the user, $\mathcal{V}_k=\mathcal{V}_{k-1} \setminus \mathcal{V}_{x_\mathtt{value}= w_x}$, otherwise.

We reveal the connection between querying attributes (i.e., querying attribute IDs) and querying attribute values in the following proposition.
\begin{proposition}
\label{pro:queryattribute}
For any attribute $x\in\mathcal{X}_{k-1}$,  $\Psi_\mathtt{CG}(\mathtt{query}(x))\geq \Psi_\mathtt{CG}(\mathtt{query}(x)=w_x)$ holds for all the possible $w_x\in\mathcal{W}_x$. 
\end{proposition}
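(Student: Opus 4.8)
The plan is to reduce both sides of the claimed inequality to closed forms in terms of a single family of partial sums, and then recognize the remaining inequality as an instance of the elementary fact that the square of a sum of nonnegative numbers dominates the sum of their squares.

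First I would fix notation. Write $S \coloneqq \mathtt{SUM}(\{\Psi_\mathtt{RE}(v_m)\mid v_m\in\mathcal{V}_{k-1}\})$ for the total unchecked relevance mass, and for each candidate value $w\in\mathcal{W}_x$ write $s_w \coloneqq \mathtt{SUM}(\{\Psi_\mathtt{RE}(v_m)\mid v_m\in\mathcal{V}_{k-1}\cap\mathcal{V}_{x_\mathtt{value}=w}\})$. Because every unchecked item realizes exactly one value of $x$, the sets $\mathcal{V}_{k-1}\cap\mathcal{V}_{x_\mathtt{value}=w}$ partition $\mathcal{V}_{k-1}$, so $\sum_{w\in\mathcal{W}_x} s_w = S$, and each $s_w\ge 0$ since relevance scores are nonnegative. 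With this, Eqs.~(\ref{eqn:ceattribute2}) and~(\ref{eqn:ceattribute3}) read $\Psi_\mathtt{CG}(w=w^*_x)=S-s_w$ and $\mathtt{Pr}(w=w^*_x)=s_w/S$.

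Next I would evaluate the two quantities. Substituting into Eq.~(\ref{eqn:ceattribute1}) and using $\sum_w s_w=S$ makes the attribute-query gain telescope to
\begin{equation*}
\Psi_\mathtt{CG}(\mathtt{query}(x))=\sum_{w\in\mathcal{W}_x}(S-s_w)\frac{s_w}{S}=S-\frac{1}{S}\sum_{w\in\mathcal{W}_x}s_w^2 .
\end{equation*}
For the attribute-value gain, reading off $\Psi_\mathtt{CG}(w_x\neq w^*_x)=s_{w_x}$ and $\mathtt{Pr}(w_x\neq w^*_x)=(S-s_{w_x})/S$ from the construction in Section~\ref{sec:attributevalue}, together with $\Psi_\mathtt{CG}(w_x=w^*_x)=S-s_{w_x}$ and $\mathtt{Pr}(w_x=w^*_x)=s_{w_x}/S$, Eq.~(\ref{eqn:attributevalue}) collapses to
\begin{equation*}
\Psi_\mathtt{CG}(\mathtt{query}(x)=w_x)=(S-s_{w_x})\frac{s_{w_x}}{S}+s_{w_x}\frac{S-s_{w_x}}{S}=\frac{2\,s_{w_x}(S-s_{w_x})}{S}.
\end{equation*}
I would then form the difference, clear the positive factor $1/S$, and use $S-s_{w_x}=\sum_{w\neq w_x}s_w$; a short rearrangement reduces the target inequality to
\begin{equation*}
\Big(\sum_{w\neq w_x}s_w\Big)^2\ \ge\ \sum_{w\neq w_x}s_w^2 ,
\end{equation*}
which holds because expanding the left-hand square yields exactly $\sum_{w\neq w_x}s_w^2$ plus the doubled cross terms $2\sum_{w_1<w_2}s_{w_1}s_{w_2}\ge 0$. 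This gives the claim for every $w_x\in\mathcal{W}_x$.

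The main obstacle I anticipate is not the final inequality, which is elementary, but the bookkeeping in the second step: correctly extracting the ``No''-branch certainty gain and its probability from the prose of Section~\ref{sec:attributevalue}, and confirming that the attribute-query sum genuinely telescopes through the partition identity $\sum_w s_w=S$. The argument also quietly relies on $\Psi_\mathtt{RE}\ge 0$ and on the attribute values forming a genuine partition of the unchecked items; were the scores allowed to be negative, the nonnegative-cross-terms step would break, so I would state the nonnegativity assumption explicitly.
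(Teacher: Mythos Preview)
Your proof is correct, but it takes a genuinely different route from the paper's. You compute both expected gains in closed form in terms of the partial masses $s_w$, form the difference, and reduce to the sum-of-squares inequality $\bigl(\sum_{w\neq w_x}s_w\bigr)^2\ge\sum_{w\neq w_x}s_w^2$. The paper instead argues termwise: it splits $\Psi_\mathtt{CG}(\mathtt{query}(x))$ into the $w_x$ summand plus the remaining summands over $w'_x\neq w_x$, then observes that for each such $w'_x$ one has $\Psi_\mathtt{CG}(w'_x=w^*_x)=\sum_{w''_x\neq w'_x}s_{w''_x}\ge s_{w_x}=\Psi_\mathtt{CG}(w_x\neq w^*_x)$, so the remaining summands collectively dominate $\Psi_\mathtt{CG}(w_x\neq w^*_x)\cdot\mathtt{Pr}(w_x\neq w^*_x)$. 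The paper's argument is more conceptual---it makes explicit that a ``yes'' answer to any specific value $w'_x$ is at least as informative as a ``no'' answer to the fixed value $w_x$---and it pinpoints equality exactly when $|\mathcal{W}_x|=2$. Your algebraic route is more mechanical but equally valid, and it makes the dependence on nonnegativity of the scores visible at the cross-terms step rather than at a term-dropping step; both proofs rely on the same partition identity $\sum_w s_w=S$ and both collapse in the binary case.
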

This proposition shows that if users could give a clear preference for the queried attribute and their preferred attribute value is one of the candidate attribute values, then querying attributes would be an equivalent or a better choice than querying attribute values.
In other words, querying attributes and querying attribute values can not operate on the same attributes (otherwise, $\Psi_\mathtt{CO}(\cdot)$ would always choose to query attributes). 
Therefore, we can combine querying items and querying attribute values by setting the action space to $\mathcal{A}=\mathcal{W}_x\times \mathcal{X}_{k-1}\cup\mathcal{V}_{k-1}$.
Then, we can re-formulate Eq.~(\ref{eqn:action}) as:
\begin{equation}
\label{eqn:action1}
a_\mathtt{query}= \argmax_{a\in\{w_x,v\}}
\Big(\max_{w_x\in\mathcal{W}_x\text{ where }x\in\mathcal{X}_{k-1}}\Psi_\mathtt{CG}(\mathtt{query} (x)=w_x),
\max_{v\in\mathcal{V}_{k-1}}\Psi_\mathtt{CG}(\mathtt{query} (v))\Big).
\end{equation}
In the context of querying attribute values, we further reveal what kind of attribute value is an ideal one in the following theorem. 
\begin{proposition}
\label{theorem:ideal}
In the context of querying attribute values, an ideal choice is always the one that can partition all the unchecked relevance scores into two equal parts (i.e., the ideal $w_x\in\mathcal{W}_x, x\in\mathcal{X}_{k-1}$ is the one that makes $\Psi_\mathtt{CG}(w_x=w^*_x)=\mathtt{SUM}(\{\Psi_\mathtt{RE}(v_m)|v_m\in\mathcal{V}_{k-1}\})/2$ hold), if it is achievable.
And the certainty gain in this case is $\Psi_\mathtt{CG}(\mathtt{query}(x)=w_x)=\mathtt{SUM}(\{\Psi_\mathtt{RE}(v_m)|v_m\in\mathcal{V}_{k-1}\})/2$.
\end{proposition}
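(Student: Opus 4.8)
The plan is to reduce the proposition to a one-variable optimization of a concave quadratic. First I would fix the total unchecked relevance mass $S \coloneqq \mathtt{SUM}(\{\Psi_\mathtt{RE}(v_m)\mid v_m\in\mathcal{V}_{k-1}\})$ and, for a candidate attribute value $w_x$, split this mass according to whether an item matches $w_x$: set $A \coloneqq \mathtt{SUM}(\{\Psi_\mathtt{RE}(v_m)\mid v_m\in\mathcal{V}_{k-1}\cap\mathcal{V}_{x_\mathtt{value}=w_x}\})$ and $B \coloneqq \mathtt{SUM}(\{\Psi_\mathtt{RE}(v_m)\mid v_m\in\mathcal{V}_{k-1}\cap\mathcal{V}_{x_\mathtt{value}\neq w_x}\})$, so that $A+B=S$ and $A,B\geq 0$.

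Next I would substitute the quantities from Section~\ref{sec:attributevalue} into Eq.~(\ref{eqn:attributevalue}). By Eq.~(\ref{eqn:ceattribute2}) we have $\Psi_\mathtt{CG}(w_x=w_x^*)=B$, and since $\Psi_\mathtt{CG}(w_x\neq w_x^*)$ is obtained by swapping $\mathcal{V}_{x_\mathtt{value}\neq w_x}$ with $\mathcal{V}_{x_\mathtt{value}=w_x}$, we get $\Psi_\mathtt{CG}(w_x\neq w_x^*)=A$. By Eq.~(\ref{eqn:ceattribute3}), $\mathtt{Pr}(w_x=w_x^*)=A/S$ and $\mathtt{Pr}(w_x\neq w_x^*)=B/S$. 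Plugging these into Eq.~(\ref{eqn:attributevalue}) yields
\begin{equation}
\Psi_\mathtt{CG}(\mathtt{query}(x)=w_x)=B\cdot\frac{A}{S}+A\cdot\frac{B}{S}=\frac{2AB}{S}.
\end{equation}

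Then I would maximize this expression. Using $B=S-A$, I write the gain as $g(A)=\tfrac{2}{S}A(S-A)$, a downward-opening parabola in $A$ on $[0,S]$. Its unique maximizer is $A=S/2$ (equivalently $B=S/2$), at which $g(S/2)=\tfrac{2}{S}\cdot\tfrac{S}{2}\cdot\tfrac{S}{2}=S/2$. Since $B=\Psi_\mathtt{CG}(w_x=w_x^*)$, the optimal condition is exactly $\Psi_\mathtt{CG}(w_x=w_x^*)=S/2$, i.e.\ the attribute value splits the unchecked relevance mass into two equal halves, and the resulting expected certainty gain equals $S/2$, matching the claim.

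The main subtlety to handle carefully is the phrase \emph{if it is achievable}: the admissible values of $A$ are not all of $[0,S]$ but only the finite set realized by the discrete partitions $\{\mathcal{V}_{x_\mathtt{value}=w_x}\}_{w_x\in\mathcal{W}_x}$. I would therefore phrase the optimization over this feasible set and note that the unconstrained optimum $A=S/2$ is attained only when some attribute value induces a balanced split; when no such value exists, concavity of $g$ still implies the best feasible $w_x$ is the one whose induced $A$ lies closest to $S/2$, which is the sense in which an equal partition is \emph{ideal}.
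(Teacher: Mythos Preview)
Your proposal is correct and essentially identical to the paper's own proof: both reduce the expected certainty gain to $\tfrac{2AB}{S}$ with $A+B=S$ and then optimize over the split, with the paper using the first-derivative condition and you using the vertex of the parabola, which are equivalent. Your explicit treatment of the ``if it is achievable'' caveat via concavity over the discrete feasible set is a useful addition that the paper's proof omits.
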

Then, we consider the bound of the expected number of turns.
To get rid of the impact of $\Psi_\mathtt{RE}(\cdot)$, we introduce a cold-start setting \citep{schein2002methods}, where $\Psi_\mathtt{RE}(\cdot)$ knows nothing about the user, and equally assigns relevance scores to all $M$ items, resulting in $\Psi_\mathtt{RE}(v_m)=1/M$ holds for any $v_m\in\mathcal{V}$.
\begin{lemma}
\label{lemma:depth}
In the context of querying attribute values, suppose that $\Psi_\mathtt{RE}(v_m)=1/M$ holds for any $v_m\in\mathcal{V}$, then the expected number of turns (denoted as $\widehat{K}$) is bounded by
$\log^{M+1}_2 \leq \widehat{K} \leq (M+1)/2$.
\end{lemma}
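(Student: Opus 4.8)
The plan is to read the two bounds as the two extreme regimes of the binary query process that the agent runs in the cold-start setting. First I would record the key simplification coming from $\Psi_\mathtt{RE}(v_m)=1/M$: every unchecked item carries the same relevance mass, so the total unchecked relevance $\mathtt{U}_k$ equals $|\mathcal{V}_k|/M$, and ``splitting the relevance mass'' becomes the same as ``splitting the candidate count.'' In particular, by Proposition~\ref{theorem:ideal} an ideal attribute-value query partitions $\mathcal{V}_{k-1}$ into two halves of equal size, and by Eq.~(\ref{eqn:ceattribute3}) each of the answers \texttt{Yes}/\texttt{No} then occurs with probability $1/2$. Thus in the best case the agent performs a balanced binary search, while in the worst case it gains information only one item at a time; the lemma asserts that $\widehat{K}$ is squeezed between these two behaviors.

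For the upper bound I would compare the agent against the naive baseline that simply queries items one at a time, a legal sequence of actions since every $v\in\mathcal{V}_{k-1}$ lies in the action space. Under uniform scores the target item is equally likely to be any of the $M$ items, so the expected index at which the item-by-item strategy meets the target is $\frac1M\sum_{i=1}^{M} i=\frac{M+1}{2}$. Since every turn of the actual agent removes at least one candidate from $\mathcal{V}_k$ (an item query discards the probed item on a \texttt{No}, an attribute-value query discards a nonempty mismatching set), the agent terminates in at most $M$ turns and, picking the maximum-certainty-gain action at each step, should never do worse in expectation than this baseline; hence $\widehat{K}\le (M+1)/2$.

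For the lower bound I would use an information-theoretic counting argument. Each turn is a binary question with only two possible user responses, so the decision tree of the whole interaction branches at most two ways per level and has at most $2^{\widehat{K}}$ leaves after $\widehat{K}$ levels. To be able to return the correct target, the process must reach a distinct leaf for each of the $M$ possible identities of the target item together with the initial/unresolved configuration, i.e.\ it must distinguish at least $M+1$ outcomes, forcing $2^{\widehat{K}}\ge M+1$ and therefore $\widehat{K}\ge \log_2(M+1)$. The best case realizing this bound is exactly the balanced halving of the first paragraph, which is attainable whenever the ideal partitions of Proposition~\ref{theorem:ideal} exist.

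The step I expect to be the main obstacle is making the two comparisons rigorous as statements about the \emph{expectation} $\widehat{K}$ rather than about worst-case depth, and in particular handling the asymmetry of item queries, whose \texttt{Yes} branch terminates immediately while only the \texttt{No} branch continues. This asymmetry means the interaction tree is not a clean balanced binary tree, so the counting argument must be phrased carefully (for instance via a Kraft-type inequality on the turn at which each target identity is resolved), and the dominance of the greedy agent over the linear baseline must be argued from the certainty-gain objective rather than simply assumed; one should also check the small-$M$ boundary cases, where the two bounds nearly coincide.
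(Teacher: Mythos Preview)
Your upper-bound argument is exactly the paper's: both compute the expected hitting time of the uniformly random target under the linear item-by-item probe, obtaining $\tfrac{1}{M}\sum_{i=1}^{M} i=(M+1)/2$.

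For the lower bound you take a genuinely different route. The paper does not argue information-theoretically; it simply exhibits the best-case scenario in which every attribute-value query halves the unchecked set, and then counts the items removed across the levels of the resulting balanced binary tree via the geometric series $1+2+\cdots+2^{\widehat{K}-1}=M$, whence $\widehat{K}=\log_2(M+1)$. Your counting-of-leaves argument is more principled in that it would apply to \emph{any} binary-question protocol, not just the particular favorable instance the paper constructs; this is a real gain in generality.

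That said, your justification for the ``$+1$'' is shaky. A clean Kraft/entropy bound on a binary protocol that must identify one of $M$ equiprobable targets yields $\widehat{K}\ge\log_2 M$, not $\log_2(M+1)$; the extra outcome you label ``initial/unresolved configuration'' does not correspond to a leaf the protocol must reach, so it should not be added to the count. If you want to match the paper's constant exactly you will need an argument closer to theirs (one more turn is needed to actually query the surviving item after $\log_2 M$ halvings, and the paper's geometric identity packages this as $2^{\widehat{K}}-1=M$) rather than an information-theoretic leaf count. You already flag the expectation-versus-depth issue yourself; be aware that fixing it via Kraft will land you at $\log_2 M$, so the final step to $\log_2(M+1)$ must come from the termination rule, not from the branching.
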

Here, the good case lies in that our conversational agent is capable of finding an attribute value to form an ideal partition at each turn, while the bad case appears when we can only check one item at each turn. 
We provide detailed proofs of  Propositions~\ref{pro:queryattribute} and~\ref{theorem:ideal}, and Lemma~\ref{lemma:depth} in Appendix~\ref{app:proof}.

\textbf{$\Psi_\mathtt{CG}(\cdot)$ for Querying Attributes in Large Discrete or Continuous Space.}
All the above querying strategies are designed in the context that for each attribute, the range of its candidate values is a ``small'' discrete space, namely $|\mathcal{W}_x|\ll |\mathcal{V}_{k-1}|$ where $x\in\mathcal{X}_{k-1}$.
When it comes to cases where $\mathcal{W}_x$ is a large discrete space or a continuous space, then either querying attribute $x$ or any attribute value $w_x\in\mathcal{W}_x$ would not be a good choice.
For example, let $x$ be \texttt{Hotel} \texttt{Price}, then when querying $x$, the user would not respond with an accurate value, and querying $x$=one possible value could be ineffective.   
To address this issue, we propose to generate a new attribute value $w_x$ and query whether the user's preference is not smaller than it or not.
Formally, we have:
\begin{equation}
\label{eqn:attributecontinuous}
\Psi_\mathtt{CG}(\mathtt{query}(x)\geq w_x) = \Psi_\mathtt{CG}(w_x \geq w^*_x)\cdot \mathtt{Pr}(w_x \geq w^*_x) + \Psi_\mathtt{CG}(w_x < w^*_x) \cdot \mathtt{Pr}(w_x < w^*_x),
\end{equation}
where $x\in\mathcal{X}_{k-1}$ and $w_x$ can be either in or out of $\mathcal{W}_x$. Compared to querying attribute values (i.e., Eq.~(\ref{eqn:attributevalue})), the new action space is $\mathcal{A}=\mathbb{R}\times\mathcal{X}_{k-1}$.   
Notice that Proposition~\ref{theorem:ideal} is also suitable for this case (see detailed description in Appendix~\ref{app:feature}), where the best partition is to divide the estimated relevance scores into two equal parts. 
Therefore, we produce $w_x$ by averaging all the candidate attribute values weighted by the corresponding relevance scores.
Formally, for each $x\in\mathcal{X}_{k-1}$, we compute $w_x$ as:
\begin{equation}
\label{eqn:continuousvalue}
w_x=\mathtt{AVERAGE}(\{\Psi_\mathtt{RE}(v_m)\cdot w_{v_m}|v_m\in\mathcal{V}_{k-1}\}),
\end{equation}
where $w_{v_m}$ is the value of attribute $x$ in item $v_m$, e.g., in Figure~\ref{fig:overview}(a), let $a$ be \texttt{Hotel} \texttt{Level}, and $v_m$ be \texttt{Hotel} \texttt{A}, then $w_{v_m}=3$.

In this case, $\mathcal{X}_{k}=\mathcal{X}_{k-1}$, and $\mathcal{V}_k=\mathcal{V}_{k-1} \setminus \mathcal{V}_{x_\mathtt{value}<w_x}$ if receiving \texttt{Yes} from the user when querying whether user preference is not smaller than $w_x$, $\mathcal{V}_k=\mathcal{V}_{k-1} \setminus \mathcal{V}_{x_\mathtt{value}\geq w_x}$ otherwise.
$\mathcal{V}_{x_\mathtt{value}<w_x}$ is the set of all the items whose value of $x$ is smaller than $w_x$ and $\mathcal{V}_{x_\mathtt{value}\geq w_x}$ is the set of the rest items.

\subsection{Plugging the Conversational Agent into Recommender Systems}
\textbf{Overall Algorithm.}
We begin by summarizing \texttt{CORE} for querying items and attributes or querying items and attribute values in Algorithm~\ref{algo:model}.
From the algorithm, we can clearly see that our $\Psi_\mathtt{CO}(\cdot)$ puts no constraints on $\Psi_\mathtt{RE}(\cdot)$ and only requires the estimated relevance scores from $\Psi_\mathtt{RE}(\cdot)$, therefore, \texttt{CORE} can be seamlessly integrated into \emph{any} recommendation platform.   
We note that \emph{in a conversational agent, querying attributes and querying attribute values can be compatible, but can not simultaneously operate on the same attribute, due to Proposition~\ref{pro:queryattribute}}.
See Appendix~\ref{app:overall} for a detailed discussion.

\textbf{Making $\Psi_\mathtt{CG}(\cdot)$ Consider Dependence among Attributes.}
We notice that the above formulations of either querying attributes or querying attribute values, does not consider the dependence among attributes (e.g., as Figure~\ref{fig:overview}(a) shows, attribute \texttt{Hotel} \texttt{Level} can largely determine attribute \texttt{Shower} \texttt{Service}).
To address this issue, we take $\Psi_\mathtt{CG}(\cdot)$ in Eq.~(\ref{eqn:ceattribute1}) as an example (see detailed descriptions of the other $\Psi_\mathtt{CG}(\cdot)$s in Appendix~\ref{app:dependence}), and re-formulate it as:   
\begin{equation}
\label{eqn:da}
\Psi^\mathtt{D}_\mathtt{CG}(\texttt{query}(a)) = \sum_{a'\in\mathcal{X}_{k-1}}\Big(\Psi_\mathtt{CG}(\texttt{query}(a'))\cdot \mathtt{Pr}(\mathtt{query}(a')|\mathtt{query}(a))\Big),
\end{equation}
where $a\in\mathcal{X}_{k-1}$, and $\mathtt{Pr}(\mathtt{query}(a')|\mathtt{query}(a))$ measures the probability of the user preference on $a$ determining the user preference on $a'$.
Compared to $\Psi_\mathtt{CG}(\texttt{query}(a))$, $\Psi^\mathtt{D}_\mathtt{CG}(\texttt{query}(a))$ further considers the impact of querying attribute $a$ on other attributes.  
To estimate $\mathtt{Pr}(\mathtt{query}(a')|\mathtt{query}(a))$, we develop two solutions.
We notice that many widely adopted recommendation approaches are developed on factorization machine (FM) \citep{rendle2010factorization}, e.g., DeepFM \citep{guo2017deepfm}.
Therefore, when applying these FM-based recommendation approaches, one approach is to directly adopt their learned weight for each pair of attributes $(a,a')$ as the estimation of $\mathtt{Pr}(\mathtt{query}(a')|\mathtt{query}(a))$.  
When applying \texttt{CORE} to any other recommendation method (e.g., DIN \citep{zhou2018deep}), we develop a statistics-based approach that does estimations by computing this conditional probability $\Psi^\mathtt{D}_\mathtt{CG}(\texttt{query}(a))$ based on the given candidate items.
We leave the detailed computations of $\Psi^\mathtt{D}_\mathtt{CG}(\texttt{query}(a))$ in both ways in Appendix~\ref{app:dependence}.

\textbf{Empowering $\Psi_\mathtt{CO}(\cdot)$ to Communicate with Humans.}
When applying \texttt{CORE} into real-world scenarios, users may provide a \texttt{Not} \texttt{Care} attitude regarding the queried attributes or queried attribute values.
In these cases, we generate $\mathcal{V}_k$ and $\mathcal{X}_k$ by $\mathcal{V}_k=\mathcal{V}_{k-1}$ and $\mathcal{X}_k=\mathcal{X}_{k-1}\backslash\{a\}$, because querying $a$ is non-informative.
To capture the user's different attitudes on queried items and attributes or attribute values, we can incorporate a pre-trained language model (LM) (e.g., \texttt{gpt-3.5-turbo}) in $\Psi_\mathtt{CO}(\cdot)$.
As our online-checking part does not require training, simply plugging an LM would not cause the non-differentiable issue.
In light of this, we exemplify some task-specific prompts to enable the conversational agent to (i) communicate like humans by prompting queried items and attributes, and (ii) extract the key answers from the user.
See Appendix~\ref{app:chatbot} for a detailed description.

\begin{algorithm}[t]
\caption{\texttt{CORE} for Querying Items and Attributes}
\begin{algorithmic}[1]
\Require A recommender system $\Psi_\mathtt{RE}(\cdot)$, an item set $\mathcal{V}$, an attribute set $\mathcal{X}$, an offline dataset $\mathcal{D}$.
\Ensure Updated recommender system $\Psi_\mathtt{RE}(\cdot)$, up-to-date dataset $\mathcal{D}$.
\State Train $\Psi_\mathtt{RE}(\cdot)$ on $\mathcal{D}$.
\Comment{Offline-Training}
\label{line:initial}
\For{each session (i.e., the given user)}
\State Initialize $k=1$ and  $\mathcal{V}_0=\mathcal{V}$, $\mathcal{X}_0=\mathcal{X}$.
\Repeat
    \State Compute $a_\mathtt{query}$ following Eq.~(\ref{eqn:action}) for querying items and attributes or following Eq.~(\ref{eqn:action1}) for querying items and attribute values.
    \label{line:core}
    \Comment{Online-Checking}
    \State Query $a_\mathtt{query}$ to the user and receive the answer.
    \label{line:key}
    \Comment{Online-Checking}
    \State Generate $\mathcal{V}_{k}$ and $\mathcal{X}_k$ from $\mathcal{V}_{k-1}$ and $\mathcal{X}_{k-1}$.
    \label{line:key1}
    \Comment{Online-Checking}
    \State Go to next turn: $k\leftarrow k+1$.
\Until{$\text{Querying a target item or }k> K_\mathtt{MAX} \text{ where }K_\mathtt{MAX} \text{ is the maximal number of turns}.$}
\label{line:maxturn}
\State Collect session data and add to $\mathcal{D}$.
\EndFor
\State Update $\Psi_\mathtt{RE}(\cdot)$ using data in $\mathcal{D}$.
\label{line:train}
\Comment{Offline-Training}
\end{algorithmic}
\label{algo:model}
\end{algorithm}

\begin{wraptable}{r}{6.5cm}
\centering
\vspace{-14mm}
    \caption{Results comparison in the context of querying attributes. See Table~\ref{tab:appbinary} for the full version.}
    \vspace{-1mm}
  \resizebox{0.45\textwidth}{!}{
		\begin{tabular}{@{\extracolsep{4pt}}cccccc}
			\toprule
			\multirow{2}{*}{$\Psi_\mathtt{RE}(\cdot)$} & \multirow{2}{*}{$\Psi_\mathtt{CO}(\cdot)$} & \multicolumn{4}{c}{Amazon} \\
			\cmidrule{3-6}
			{} & {} & T@3 & S@3 & T@5 & S@5 \\
			\midrule
			\multirow{4}{*}{\makecell[c]{COLD \\ START}} 
                & ME & 
			3.04 & 0.98 & 5.00 & 1.00 \\
                \cmidrule{2-6}
			{} & \texttt{CORE} & 
			\textbf{2.88} & \textbf{1.00} & \textbf{2.87} & \textbf{1.00} \\
                {} & \texttt{CORE}$^+_\texttt{D}$ & 
			2.84 & 1.00 & 2.86 & 1.00 \\
			\midrule
			\multirow{6}{*}{FM} 
                {} & AG & 
			2.76 & 0.74 & 2.97 & 0.83 \\
                & CRM &
                3.07 & 0.98 & 3.37 & 1.00
            \\
			{} & EAR & 
			2.98 & 0.99 & 3.13 & 1.00 \\
                \cmidrule{2-6}
			{} & \texttt{CORE} & 
			\textbf{2.17} & \textbf{1.00} & \textbf{2.16} & \textbf{1.00} \\
			{} & \texttt{CORE}$^+_\mathtt{D}$ & 
			2.14 & 1.00 & 2.14 & 1.00 \\
			\bottomrule
		\end{tabular}
        \label{tab:binary}
	}
  \label{tab:your_table}
  \vspace{-2mm}
\end{wraptable}

\section{Experiments}
\label{sec:exp}
\subsection{Experimental Configurations}
We summarize different experimental settings as follows.
(i) We design two different quering strategies regarding attributes (shown in line~\ref{line:core} in Algorithm~\ref{algo:model}). 
One is querying attributes (i.e., attribute IDs); and
the other is querying attribute values.
(ii) We introduce two different recommender system settings.
One is the hot-start setting (shown in line~\ref{line:initial} in Algorithm~\ref{algo:model}) that initializes the estimated relevance scores of items by a given pre-trained recommender system; and the other is the cold-start setting where those estimated relevance scores are uniformly generated (corresponding to the case where the recommender system knows nothing about the given user). 
Because the conversational agent $\Psi_\mathtt{CO}(\cdot)$ operates in a dynamic process, 
we develop a new simulator to simulate the Human-AI recommendation interactions, which consists of a conversational agent and a user agent.
Specifically, for each user, we use her browsing log as session data, and treat all the items receiving positive feedback (e.g., chick) as target items.
Then, for each $k$-th turn, when the conversational agent queries an attribute $x\in\mathcal{X}_{k-1}$, the user agent returns a specific attribute value if all the target items hold the same value for $x$; otherwise, the user agent returns \texttt{Not} \texttt{Care}. 
When the conversational agent queries an attribute value $w_x\in\mathcal{W}_x$, the user agent returns \texttt{Yes} if at least one target item holds $w_x$ as the value of attribute $x$; otherwise, returns \texttt{No}.

For each experimental setting,
we first set $K_\mathtt{MAX}$, and then evaluate the performance in terms of the average turns needed to end the sessions, denoted as T@$K_\mathtt{MAX}$ (where for each session, if $\Psi_\mathtt{CO}(\cdot)$ successfully queries a target item within $K_\mathtt{MAX}$ turns, then return the success turn; otherwise, we enforce $\Psi_\mathtt{CO}(\cdot)$ to query an item at $(K_\mathtt{MAX}+1)$-th turn, if succeeds, return $K_\mathtt{MAX}+1$, otherwise return $K_\mathtt{MAX}+3$); and the average success rate, denoted as S@$K_\mathtt{MAX}$ (where for each session, if $\Psi_\mathtt{CO}(\cdot)$ successfully queries a target item within $K_\mathtt{MAX}$ turns, then we enforce $\Psi_\mathtt{CO}(\cdot)$ to query an item after $K_\mathtt{MAX}$ turns, if succeeds, return $1$, otherwise return $0$).

To verify \texttt{CORE} can be applied to a variety of recommendation platforms,  
we conduct evaluations on three tubular datasets: Amazon \citep{amazon,mcauley2016addressing}, LastFM \citep{lastfm} and Yelp \citep{yelp}, three sequential datasets: Taobao \citep{taobao}, Tmall \citep{tmall} and Alipay \citep{alipay}, two graph-structured datasets: Douban Movie \citep{touban,zhu2019dtcdr} and Douban Book \citep{touban,zhu2019dtcdr}. 
The recommendation approaches used in this paper, i.e., $\Psi_\mathtt{RE}(\cdot)$s, include FM \citep{rendle2010factorization}, DEEP FM \citep{guo2017deepfm}, PNN \citep{qu2016product}, DIN \citep{zhou2018deep}, GRU \citep{hidasi2015session}, LSTM \citep{graves2012long}, MMOE \citep{ma2018modeling}, GCN \citep{kipf2016semi} and GAT \citep{velivckovic2017graph}. 
We also use COLD START to denote the cold-start recommendation setting.
The conversational methods used in this paper, i.e., $\Psi_\mathtt{CO}(\cdot)$s, include (i) Abs Greedy (AG) always queries an item with the highest relevance score at each turn; (ii) Max Entropy (ME) always queries the attribute with the maximum entropy in the context of querying attributes, or queries the attribute value of the chosen attribute, with the highest frequency in the context of querying attribute values; (iii) CRM \citep{sun2018conversational}, (iv) EAR \citep{lei2020estimation}.
Here, AG can be regarded as a strategy of solely applying $\Psi_\mathtt{RE}(\cdot)$.
Both CRM and EAR are reinforcement learning based approaches, originally proposed on the basis of FM recommender system.
Thus, we also evaluate their performance with hot-start FM-based recommendation methods, because when applying them to a cold-start recommendation platform, their strategies would reduce to a random strategy.
Consider that ME is a $\Psi_\mathtt{CO}(\cdot)$, independent of $\Psi_\mathtt{RE}(\cdot)$ (namely, the performance of hot-start and cold-start recommendation settings are the same); and therefore, we only report their results in the cold-start recommendation setting. 
We further introduce a variant of \texttt{CORE}, denoted as \texttt{CORE}$^+_\mathtt{D}$ where we compute and use $\Psi^\mathtt{D}_\mathtt{CG}(\cdot)$s instead of  $\Psi_\mathtt{CG}(\cdot)$s in line~\ref{line:core} in Algorithm~\ref{algo:model}.

We provide detailed descriptions of datasets and data pre-processing, simulation design, baselines, and implementations in Appendix~\ref{app:data}, \ref{app:simulation}, \ref{app:baseline}, and \ref{app:implement}.

\begin{table}
	\centering
	\caption{Results comparison of querying attribute values on tabular datasets. See Table~\ref{tab:appmain} for the full version.}
        \vspace{1mm}
	\resizebox{1.00\textwidth}{!}{
		\begin{tabular}{@{\extracolsep{4pt}}cccccccccccccc}
			\toprule
			\multirow{2}{*}{$\Psi_\mathtt{RE}(\cdot)$} & \multirow{2}{*}{$\Psi_\mathtt{CO}(\cdot)$} & \multicolumn{4}{c}{Amazon} & \multicolumn{4}{c}{LastFM} & \multicolumn{4}{c}{Yelp} \\
			\cmidrule{3-6}
			\cmidrule{7-10}
			\cmidrule{11-14}
			{} & {} & T@3 & S@3 & T@5 & S@5 & T@3 & S@3 & T@5 & S@5 & T@3 & S@3 & T@5 & S@5 \\
			\midrule
			\multirow{4}{*}{\makecell[c]{COLD \\ START}} & AG & 
			6.47 & 0.12 & 7.83 & 0.23 & 
			6.77 & 0.05 & 8.32 & 0.14 & 
			6.65 & 0.08 & 8.29 & 0.13 \\ 
			{} & ME & 
			6.50 & 0.12 & 8.34 & 0.16 & 
			6.84 & 0.04 & 8.56 & 0.11 & 
			6.40 & 0.15 & 8.18 & 0.20\\
                \cmidrule{2-14}
			{} & \texttt{CORE} & 
			\textbf{6.02} & \textbf{0.25} & \textbf{6.18} & \textbf{0.65} & 
			\textbf{5.84} & \textbf{0.29} & \textbf{5.72} & \textbf{0.74} & 
			\textbf{5.25} & \textbf{0.19} & \textbf{6.23} & \textbf{0.65}\\
                {} & \texttt{CORE}$^+_\texttt{D}$ & 
			6.00 & 0.26 & 6.01 & 0.67 & 
			5.79 & 0.30 & 5.70 & 0.75 & 
			5.02 & 0.21 & 6.12 & 0.68\\
			\midrule
			\multirow{5}{*}{FM} & AG & 
			\textbf{2.76} & 0.74 & \textbf{2.97} & 0.83 & 
			4.14 & 0.52 & 4.67 & 0.64 &
			3.29 & 0.70 & 3.39 & 0.81\\
			{} & CRM & 
			4.58 & 0.28 & 6.42 & 0.38 & 
			4.23 & 0.34 & 5.87 & 0.63 & 
			4.12 & 0.25 & 6.01 & 0.69\\
			{} & EAR & 
			4.13 & 0.32 & 6.32 & 0.42 & 
			4.02 & 0.38 & 5.45 & 0.67 & 
			4.10 & 0.28 & 5.95 & 0.72\\
                \cmidrule{2-14}
			{} & \texttt{CORE} & 
			3.26 & \textbf{0.83} & 3.19 & \textbf{0.99} &
			\textbf{3.79} & \textbf{0.72} & \textbf{3.50} & \textbf{0.99} &
			\textbf{3.14} & \textbf{0.84} & \textbf{3.20} & \textbf{0.99}\\
			{} & \texttt{CORE}$^+_\mathtt{D}$ & 
			3.16 & 0.85 & 3.22 & 1.00 & 
			3.75 & 0.74 & 3.53 & 1.00 & 
			3.10 & 0.85 & 3.23 & 1.00\\
			\midrule
			\multirow{5}{*}{\makecell[c]{DEEP \\ FM}} & AG & 
			\textbf{3.07} & 0.71 & 3.27 & 0.82 & 
			3.50 & 0.68 & 3.84 & 0.79 & 
			3.09 & 0.74 & 3.11 & 0.88\\
			{} & CRM & 
			4.51 & 0.29 & 6.32 & 0.40 & 
			4.18 & 0.38 & 5.88 & 0.63 & 
			4.11 & 0.23 & 6.02 & 0.71\\
			{} & EAR & 
			4.47 & 0.30 & 6.35 & 0.43 & 
			4.01 & 0.37 & 5.43 & 0.69 & 
			4.01 & 0.32 & 5.74 & 0.75\\
                \cmidrule{2-14}
			{} & \texttt{CORE} & 
			3.23 & \textbf{0.85} & \textbf{3.22} & \textbf{0.99} &
			\textbf{3.47} & \textbf{0.81} & \textbf{3.34} & \textbf{1.00} &
			\textbf{2.98} & \textbf{0.93} & \textbf{3.11} & \textbf{1.00}\\
                \midrule
			\multirow{2}{*}{PNN} & AG & 
			3.02 & 0.74 & 3.10 & 0.87 & 
			3.44 & 0.67 & 3.53 & 0.84 & 
			2.83 & 0.77 & 2.82 & 0.91\\
                \cmidrule{2-14}
			{} & \texttt{CORE} & 
			\textbf{3.01} & \textbf{0.88} & \textbf{3.04} & \textbf{0.99} &
			\textbf{3.10} & \textbf{0.87} & \textbf{3.20} & \textbf{0.99} &
			\textbf{2.75} & \textbf{0.88} & \textbf{2.76} & \textbf{1.00}\\
			\bottomrule
		\end{tabular}
	}
	\label{tab:main}
	\vspace{-5mm}
\end{table}

\subsection{Experimental Results}
\label{sec:res}
We report our results of querying attributes and items in Table~\ref{tab:binary}, and the results of querying attribute values and items in Tables~\ref{tab:main}, and \ref{tab:seq}, \ref{tab:graph}, and summarize our findings as follows.

\textbf{Reinforcement learning based methods work well in querying items and attributes but perform poorly in querying items and attribute values.}
By comparing Table~\ref{tab:binary} to Table~\ref{tab:main}, we can see a huge performance reduction of CRM and EAR.
One possible explanation is that compared to attribute IDs, the action space of querying attribute values is much larger. 
Thus, it usually requires much more collected data to train a well-performed policy.

\textbf{T@$K_\mathtt{MAX}$ could not align with S@$K_\mathtt{MAX}$.}
A higher success rate might not lead to fewer turns, and ME gains a worse performance than AG in some cases in the cold-start recommendation setting.
The main reason is that although querying an attribute value can obtain an equivalent or more certainty gain than querying an item at most times, however, only querying (a.k.a., recommending) an item could end a session.
Therefore, sometimes, querying an attribute value is too conservative.
It explains why \texttt{CORE} outperforms AG in terms of S@3 but gets a lower score of T@3 in the Amazon dataset and FM recommendation base.

\textbf{Our conversational agent can consistently improve the recommendation performance in terms of success rate.}
\texttt{CORE} can consistently outperform AG, in terms of success rate, especially for the cold-start recommendation setting.
As AG means solely using recommender systems, it indicates that $\Psi_\mathtt{CO}(\cdot)$ can consistently help $\Psi_\mathtt{RE}(\cdot)$.
One possible reason is that our uncertainty minimization framework unifies querying attribute values and items.
In other words, AG is a special case of \texttt{CORE}, where only querying items is allowed.

\textbf{Considering Dependence among attributes is helpful.}
Comparisons between \texttt{CORE} and \texttt{CORE}$^+_\mathtt{D}$ reveal that considering the dependence among attributes could improve the performance of  \texttt{CORE} in most cases. 

\begin{wrapfigure}{r}{0.45\textwidth}
\vspace{-10mm}
  \begin{center}
    \includegraphics[width=0.45\textwidth]{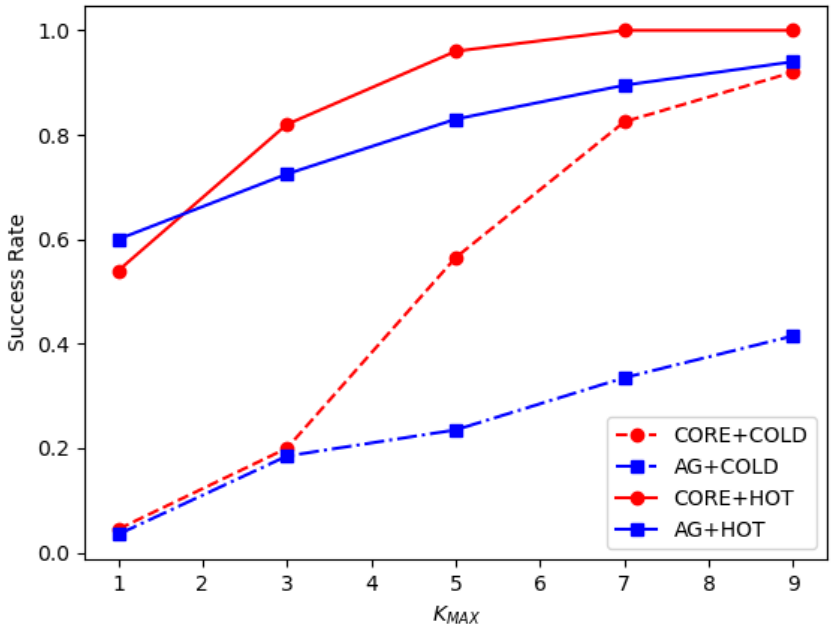}
  \end{center}
  \vspace{-3mm}
  \caption{Comparisons of \texttt{CORE} and AG with different $K_\mathtt{MAX}$ in both cold-start and hot-start settings.}
  \label{fig:turn}
  \vspace{-15mm}
\end{wrapfigure}
We further investigate the impact of $K_\mathtt{MAX}$ by assigning $K_\mathtt{MAX}=1,3,5,7,9$ and reporting the results of \texttt{CORE} and AG on Amazon dataset in the context of the cold-start and hot-start recommendation settings in Figure~\ref{fig:turn}. The results further verify the superiority of \texttt{CORE}, especially with a cold-start $\Psi_\mathtt{RE}(\cdot)$. 

We also provide a case study of incorporating a large LM into \texttt{CORE} to handle free-text inputs and output human language, and a visualization of an online decision tree in Appendix~\ref{app:chatbot} and \ref{app:visulation}.

\begin{table}[t]
	\centering
	\caption{Results comparison of querying attribute values on sequential datasets. See Table~\ref{tab:appseq} for the full version.}
        \vspace{1mm}
	\resizebox{1.00\textwidth}{!}{
		\begin{tabular}{@{\extracolsep{4pt}}cccccccccccccc}
			\toprule
			\multirow{2}{*}{$\Psi_\mathtt{RE}(\cdot)$} & \multirow{2}{*}{$\Psi_\mathtt{CO}(\cdot)$} & \multicolumn{4}{c}{Taobao} & \multicolumn{4}{c}{Tmall} & \multicolumn{4}{c}{Alipay} \\
			\cmidrule{3-6}
			\cmidrule{7-10}
			\cmidrule{11-14}
			{} & {} & T@3 & S@3 & T@5 & S@5 & T@3 & S@3 & T@5 & S@5 & T@3 & S@3 & T@5 & S@5 \\
			\midrule
			\multirow{4}{*}{\makecell[c]{COLD \\ START}} & AG & 
			6.30 & 0.15 & 7.55 & 0.27 & 
			6.80 & 0.04 & 8.54 & 0.09 & 
			6.47 & 0.11 & 7.95 & 0.19 \\ 
			{} & ME & 
			6.43 & 0.14 & 7.82 & 0.29 & 
			6.76 & 0.05 & 8.50 & 0.12 & 
			6.71 & 0.07 & 8.46 & 0.11 \\
                \cmidrule{2-14}
			{} & \texttt{CORE} & 
			\textbf{5.42} & \textbf{0.39} & \textbf{5.04} & \textbf{0.89} & 
			\textbf{6.45} & \textbf{0.13} & \textbf{7.38} & \textbf{0.37} & 
			\textbf{5.98} & \textbf{0.25} & \textbf{6.17} & \textbf{0.65}\\
			\midrule
			\multirow{2}{*}{DIN} & AG & 
			2.71 & 0.85 & 2.83 & 0.95 & 
			4.14 & 0.51 & 4.81 & 0.59 & 
			\textbf{3.10} & 0.82 & 3.35 & 0.85\\
                \cmidrule{2-14}
			{} & \texttt{CORE} & 
			\textbf{2.45} & \textbf{0.97} & \textbf{2.54} & \textbf{1.00} &
			\textbf{4.12} & \textbf{0.64} & \textbf{4.16} & \textbf{0.89} &
			3.25 & \textbf{0.83} & \textbf{3.32} & \textbf{0.96}\\
                \midrule
			\multirow{2}{*}{GRU} & AG & 
			2.80 & 0.80 & 2.64 & 0.97 & 
			3.82 & 0.56 & 4.40 & 0.64 & 
			3.17 & 0.83 & 3.29 & 0.87\\
                \cmidrule{2-14}
			{} & \texttt{CORE} & 
			\textbf{2.31} & \textbf{0.98} & \textbf{2.44} & \textbf{1.00} &
			\textbf{3.81} & \textbf{0.72} & \textbf{3.91} & \textbf{0.92} &
			\textbf{3.10} & \textbf{0.84} & \textbf{3.11} & \textbf{0.96}\\
                \midrule
			\multirow{2}{*}{LSTM} & AG & 
			2.60 & 0.85 & 2.52 & 0.97 & 
			4.73 & 0.41 & 5.63 & 0.49 & 
			3.43 & 0.78 & 3.27 & 0.89\\
                \cmidrule{2-14}
			{} & \texttt{CORE} & 
			\textbf{2.37} & \textbf{0.97} & \textbf{2.49} & \textbf{1.00} &
			\textbf{4.58} & \textbf{0.55} & \textbf{4.36} & \textbf{0.90} &
			 \textbf{3.03} & \textbf{0.84} & \textbf{3.16} & \textbf{0.97}\\
                \midrule
                \multirow{2}{*}{MMOE} & AG & 
			3.04 & 0.75 & 2.98 & 0.92 & 
			4.10 & 0.54 & 4.56 & 0.62 & 
			3.58 & 0.83 & 3.90 & 0.92\\
                \cmidrule{2-14}
			{} & \texttt{CORE} & 
			\textbf{2.48} & \textbf{0.96} & \textbf{2.60} & \textbf{1.00} &
			\textbf{3.92} & \textbf{0.65} & \textbf{4.19} & \textbf{0.85} &
			\textbf{3.21} & \textbf{0.91} & \textbf{3.17} & \textbf{0.98}\\
			\bottomrule
		\end{tabular}
	}
	\label{tab:seq}
	\vspace{-3mm}
\end{table}

\begin{table}[t]
	\centering
	\caption{Results comparison of querying attribute values on graph datasets. See Table~\ref{tab:appgraph} for the full version.}
        \vspace{1mm}
	\resizebox{0.75\textwidth}{!}{
		\begin{tabular}{@{\extracolsep{4pt}}cccccccccc}
			\toprule
			\multirow{2}{*}{$\Psi_\mathtt{RE}(\cdot)$} & \multirow{2}{*}{$\Psi_\mathtt{CO}(\cdot)$} & \multicolumn{4}{c}{Douban Movie} & \multicolumn{4}{c}{Douban Book} \\
			\cmidrule{3-6}
			\cmidrule{7-10}
			{} & {} & T@3 & S@3 & T@5 & S@5 & T@3 & S@3 & T@5 & S@5 \\
			\midrule
			\multirow{4}{*}{\makecell[c]{COLD \\ START}} & AG & 
			6.52 & 0.11 & 7.94 & 0.21 & 
			6.36 & 0.15 & 7.68 & 0.26 \\ 
			{} & ME & 
			6.60 & 0.10 & 8.16 & 0.21 & 
			6.40 & 0.15 & 8.04 & 0.24 \\
                \cmidrule{2-10}
			{} & \texttt{CORE} & 
			\textbf{5.48} & \textbf{0.38} & \textbf{4.84} & \textbf{0.94} & 
			\textbf{5.96} & \textbf{0.26} & \textbf{5.08} & \textbf{0.92} \\
			\midrule
			\multirow{2}{*}{GAT} & AG & 
			3.75 & 0.63 & 3.65 & 0.87 & 
			3.56 & 0.64 & 3.41 & 0.87 \\
                \cmidrule{2-10}
			{} & \texttt{CORE} & 
			\textbf{2.89} & \textbf{0.91} & \textbf{2.97} & \textbf{1.00} & 
			\textbf{2.80} & \textbf{0.92} & \textbf{2.91} & \textbf{1.00} \\
                \midrule
			\multirow{2}{*}{GCN} & AG & 
			3.21 & 0.69 & 3.33 & 0.83 & 
			3.20 & 0.71 & 3.18 & 0.89 \\
                \cmidrule{2-10}
			{} & \texttt{CORE} & 
			\textbf{2.76} & \textbf{0.92} & \textbf{2.81} & \textbf{1.00} &
			\textbf{2.85} & \textbf{0.91} & \textbf{2.85} & \textbf{1.00} \\
			\bottomrule
		\end{tabular}
	}
	\label{tab:graph}
	\vspace{-1mm}
\end{table}

\section{Conclusions and Future Work}
In this paper, we propose \texttt{CORE} that can incorporate a conversational agent into any recommendation platform in a plug-and-play fashion.
Empirical results verify that \texttt{CORE} outperforms existing reinforcement learning-based and statistics-based approaches in both the setting of querying items and attributes, and the setting of querying items and attribute values.
In the future, it would be interesting to evaluate \texttt{CORE} in some online real-world recommendation platforms. 

\newpage
\bibliography{main}
\bibliographystyle{plain}

\newpage
\appendix

\renewcommand{\thesection}{A\arabic{section}}
\renewcommand{\thesubsection}{\thesection.\arabic{subsection}}

\newcommand{\beginsupplementary}{%
	\setcounter{table}{0}
	\renewcommand{\thetable}{A\arabic{table}}%
	\setcounter{figure}{0}
	\renewcommand{\thefigure}{A\arabic{figure}}%
	\setcounter{section}{0}
        \renewcommand{\thedefinition}{A\arabic{definition}}%
	\setcounter{definition}{0}
        \renewcommand{\thetheorem}{A\arabic{theorem}}%
	\setcounter{theorem}{0}
        \renewcommand{\thelemma}{A\arabic{lemma}}%
	\setcounter{lemma}{0}
        \renewcommand{\theproposition}{A\arabic{proposition}}%
	\setcounter{proposition}{0}
}
\newcommand{\suptitl}{Supplementary Materials for:\\ \titl}
\newcommand{\suptitlrunning}{Supplementary Materials: \titlshort}
\beginsupplementary
\onecolumn
\title{\suptitl}

\section{Conversational Agent Design on Uncertainty Minimization Framework}
\subsection{Detailed Deviations}
\label{app:theory}
This paper introduces a conversational agent built upon the recommender system to interact with a human user.
We begin by summarizing the interaction principles, taking Figure~\ref{fig:overview} as an example.

\begin{definition}
[\textbf{Conversational Agent and Human User Interactions}]
\label{def:appinteract}
Our conversational agent is designed to act in the following four ways.
\begin{itemize}[topsep = 0pt,itemsep=-3pt,leftmargin =15pt]
\item[(i)] Query an item $v\in\mathcal{V}_{k-1}$, where $\mathcal{V}_{k-1}$ is the set of unchecked items after $k-1$ interactions (e.g., recommend \emph{\texttt{Hotel} \texttt{A}} to the user). 
\item[(ii)] Query an attribute $x\in\mathcal{X}_{k-1}$, where $\mathcal{X}_{k-1}$ is the set of unchecked attributes after $k-1$ interactions (e.g., query what \emph{\texttt{Hotel} \texttt{Level}} does the user want).
\item[(iii)] Query whether the user's preference on an attribute $x\in\mathcal{X}_{k-1}$ is equal to a specific attribute value $w_x\in\mathcal{W}_x$ where $\mathcal{W}_x$ is the set of values of attribute $x$ (e.g., query whether the user likes a hotel with \emph{\texttt{Hotel} \texttt{Level}=}5).
\item[(iv)] Query whether the user's preference on an attribute  is not smaller than a specific value $w_x\in\mathbb{R}$ (e.g., query whether the user likes a hotel with \emph{\texttt{Hotel} \texttt{Level}$\geq$}3.5).
\end{itemize}
The human user is supposed to respond in the following ways.
\begin{itemize}[topsep = 0pt,itemsep=-3pt,leftmargin =15pt]
\item[(i)] For queried item $v$, the user should answer \emph{\texttt{Yes}} (if $v$ satisfies the user) or \emph{\texttt{No}} (otherwise) (e.g., answer \emph{\texttt{Yes}}, if the user likes \emph{\texttt{Hotel} \texttt{A}}).
\item[(ii)] For queried attribute $x$, the user should answer her preferred attribute value, denoted as $w^*_x\in\mathcal{W}_x$ (e.g., answer 3 for queried attribute \emph{\texttt{Hotel} \texttt{Level}}), or answer \emph{\texttt{Not} \texttt{Care}} to represent that any attribute value works.
\item[(iii)] For queried attribute value $w_x$, the user should answer \emph{\texttt{Yes}} (if $w_x$ matches the user preference) or \emph{\texttt{No}} (otherwise) (e.g.,  answer \emph{\texttt{Yes}}, if the user wants a hotel with \emph{\texttt{Hotel} \texttt{Level}=}5), or answer \emph{\texttt{Not} \texttt{Care}} to represent that any attribute value works.
\item[(iv)] For queried attribute value $w_x$, the user should answer \emph{\texttt{Yes}} (if the user wants an item whose value of attribute $x$ is not smaller than $w_x$) or \emph{\texttt{No}} (otherwise)
(e.g., answer \emph{\texttt{Yes}}, if the user wants a hotel with \emph{\texttt{Hotel} \texttt{Level}=}5), or answer \emph{\texttt{Not} \texttt{Care}} to represent that any attribute value works.
\end{itemize}
\end{definition}

Then, we separately describe the key concepts, including uncertainty, certainty gain, and expected certainty gain, introduced in this paper.
\begin{definition}
[\textbf{Uncertainty}]
For the $k$-th turn, we define uncertainty, denoted as $\mathtt{U}_k$, to measure how many estimated relevance scores are still unchecked, which can be formulated as:
\begin{equation}
\mathtt{U}_k \coloneqq \mathtt{SUM}(\{\Psi_\mathtt{RE}(v_m)|v_m\in\mathcal{V}_{k}\}),
\end{equation}
where $\Psi_\mathtt{RE}(v_m)$ outputs the estimated relevance score for item $v_m$.
The above equation tells us that the uncertainty of each turn is decided by the unchecked items.
\end{definition}
It is straightforward to derive the certainty gain, as the uncertainty reduction at each turn.
\begin{definition}
[\textbf{Certainty Gain}]
For the $k$-th turn, we define certainty gain of $k$-th interaction as:
\begin{equation}
\Delta \mathtt{U}_k \coloneqq \mathtt{U}_{k-1}-\mathtt{U}_{k}=\mathtt{SUM}(\{\Psi_\mathtt{RE}(v_m)|v_m\in\Delta\mathcal{V}_k\}),
\end{equation}
where $\Delta\mathcal{V}_k=\mathcal{V}_{k-1} \setminus \mathcal{V}_k$.
For simplicity, we use $a$ to denote the $k$-th action of the conversational agent.
According to the Human-AI interactions introduced in Definition~\ref{def:appinteract}, we can derive:
\begin{equation}
\begin{aligned}
\Delta \mathcal{V}_k=\left\{
\begin{array}{cc}
\mathcal{V}_{k}, &a\in\mathcal{V}_{k-1}\text{ and the answer to querying (i) is \emph{\texttt{Yes}}},\\
\{a\}, &a\in\mathcal{V}_{k-1}\text{ and the answer to querying (i) is \emph{\texttt{No}}},\\
\mathcal{V}_{a_\mathtt{value}\neq w^*_a}\cap\mathcal{V}_{k-1}, &a\in\mathcal{X}_{k-1}\text{ and the answer to querying (ii) is }w^*_a,\\
\mathcal{V}_{x_\mathtt{value}\neq w_x}\cap\mathcal{V}_{k-1}, &a\in\mathcal{W}_x\text{ where }x\in\mathcal{X}_{k-1} \text{ and the answer to querying (iii) is \emph{\texttt{Yes}}},\\
\mathcal{V}_{x_\mathtt{value}=w_x}\cap\mathcal{V}_{k-1}, &a\in\mathcal{W}_x\text{ where }x\in\mathcal{X}_{k-1} \text{ and the answer to querying (iii) is \emph{\texttt{No}}},\\
\mathcal{V}_{x_\mathtt{value}<w_x}\cap\mathcal{V}_{k-1}, & a\in\mathbb{R}, x\in\mathcal{X}_{k-1}\text{ and the answer to querying (iv) is \emph{\texttt{Yes}}},\\
\mathcal{V}_{x_\mathtt{value}\geq w_x}\cap\mathcal{V}_{k-1}, & a\in\mathbb{R}, x\in\mathcal{X}_{k-1}\text{ and the answer to querying (iv) is \emph{\texttt{No}}}.\\
\emptyset, &\text{the answer to querying either (ii), (iii) or (iv) is \emph{\texttt{Not} \texttt{Care}}},
\end{array}
\right.
\end{aligned}
\end{equation}
where $\mathcal{V}_{a_\mathtt{value}\neq w^*_a}$ is the set of unchecked items whose value of attribute $a$ is not equal to the user answer $w^*_a$, $\mathcal{V}_{x_\mathtt{value}\neq w_x}$ is the set of unchecked items whose value of attribute $x$ is not equal to the queried attribute value $w_x$, $\mathcal{V}_{x_\mathtt{value}= w_x}$, a subset of $\mathcal{V}_{k-1}$, is the set of unchecked items whose value of attribute $x$ is equal to the queried attribute value $w_x$, 
$\mathcal{V}_{x_\mathtt{value}< w_x}$ is the set of unchecked items whose value of attribute $x$ is smaller than the queried attribute value $w_x$,
$\mathcal{V}_{x_\mathtt{value}\geq  w_x}$ is the set of unchecked items whose value of attribute $x$ is not smaller than the queried attribute value $w_x$.
\end{definition}
To estimate the certainty gain from taking each possible action, we introduce the expected certainty gain as follows.
\begin{definition}
[\textbf{Expected Certainty Gain}]
For the $k$-th turn, we define expected certainty gain to estimate $\Delta \mathtt{U}_k$ on $\Psi_\mathtt{CO}(\cdot)$ taking a different action. 
\begin{equation}
\begin{aligned}
\Psi_\mathtt{{CG}}(\cdot)=\left\{
\begin{array}{cc}
\text{Eq.~(\ref{eqn:ceitem})}, &a\in\mathcal{V}_{k-1},\text{ i.e., querying (i)},\\
\text{Eq.~(\ref{eqn:ceattribute1})}, &a\in\mathcal{X}_{k-1},\text{ i.e., querying (ii)},\\
\text{Eq.~(\ref{eqn:attributevalue})}, &a\in\mathcal{W}_x,\text{ i.e., querying (iii)},\\
\text{Eq.~(\ref{eqn:attributecontinuous})}, & a\in\mathbb{R}, x\in\mathcal{X}_{k-1},\text{ i.e., querying (iv)}.
\end{array}
\right.
\end{aligned}
\end{equation}
\end{definition}
Then, at each turn, we can compute the candidate action, getting the maximum expected certainty gain, as the action to take, denoted as $a_\mathtt{query}$.
In practice, as shown in Proposition~\ref{pro:queryattribute}, for each attribute, querying attribute IDs, i.e., (ii), and querying attribute values, i.e., (iii), is not compatible.
And, (iv) is particularly designed for a large discrete or continuous value space, which can be regarded as a specific attribute value generation engineering for (iii) (i.e., using Eq.~(\ref{eqn:continuousvalue}) to directly compute the queried value for each attribute), and thus, we treat (iv) as a part of (iii).   
Therefore, we organize two querying strategies.
One is querying (i) and (ii), whose objective can be formulated as Eq.~(\ref{eqn:action}).
The other one is querying (i) and (iii), and the objective can be written as Eq.~(\ref{eqn:action1}).

Besides $\mathcal{V}_k$, we further summarize the update of $\mathcal{X}_k$ as follows. 
Similarly, we can define $\Delta \mathcal{X}_k\coloneqq \mathcal{X}_{k-1} \setminus \mathcal{X}_k$, then $\Delta\mathcal{X}_k$ can be written as:
\begin{equation}
\begin{aligned}
\Delta \mathcal{X}_k=\left\{
\begin{array}{cc}
\{a\}, &\text{querying (ii)},\\
\{x\}, &\text{querying either (iii) or (iv), and there is no unchecked attribute value in $x$},\\
\emptyset, &\text{querying either (i) or (iv)}.
\end{array}
\right.
\end{aligned}
\end{equation}
Based on the above, \texttt{CORE} runs as Algorithm~\ref{algo:model} shows.

\textbf{Remark.}
One of the advantages of querying attribute values, compared to querying attributes, is that the user's answer to queried attribute would be out of the candidate attribute values (i.e., $\mathcal{W}_x$ for queried attribute $x$). 
We are also aware that one possible solution is that the conversational agent would list all the candidate attribute values in the query.
However, we argue that this approach would work only when the number of candidate values is small (namely, $|\mathcal{W}_x|$ is small) such as attributes \texttt{Color} and \texttt{Hotel} \texttt{Level}, but can not work when there are many candidate values, e.g., attribute \texttt{Brand}, since listing all of them would significantly reduce the user satisfaction.

\subsection{Proofs}
\label{app:proof}
\begin{proposition}
For any attribute $x\in\mathcal{X}_{k-1}$,  $\Psi_\mathtt{CG}(\mathtt{query}(x))\geq \Psi_\mathtt{CG}(\mathtt{query}(x)=w_x)$ holds for all the possible value $w_x\in\mathcal{W}_x$. 
\end{proposition}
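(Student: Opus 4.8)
The plan is to reduce both sides to closed-form expressions in a common set of quantities and then invoke an elementary inequality. First I would fix the attribute $x\in\mathcal{X}_{k-1}$ and, for each candidate value $w\in\mathcal{W}_x$, abbreviate $s_w\coloneqq\mathtt{SUM}(\{\Psi_\mathtt{RE}(v_m)\mid v_m\in\mathcal{V}_{k-1}\cap\mathcal{V}_{x_\mathtt{value}=w}\})$ and $S\coloneqq\mathtt{SUM}(\{\Psi_\mathtt{RE}(v_m)\mid v_m\in\mathcal{V}_{k-1}\})$. Since each unchecked item takes exactly one value of $x$, the sets $\{\mathcal{V}_{x_\mathtt{value}=w}\}_{w\in\mathcal{W}_x}$ partition $\mathcal{V}_{k-1}$, so $\sum_{w\in\mathcal{W}_x}s_w=S$, and every $s_w\ge 0$ because relevance scores are nonnegative.

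Next I would rewrite the two certainty gains in these terms. For querying the attribute, combining Eqs.~(\ref{eqn:ceattribute1})--(\ref{eqn:ceattribute3}) gives $\Psi_\mathtt{CG}(\mathtt{query}(x))=\sum_{w\in\mathcal{W}_x}(S-s_w)\,\tfrac{s_w}{S}=S-\tfrac{1}{S}\sum_{w}s_w^2$. For querying a single value $w_x$, I must track the bookkeeping from Section~\ref{sec:attributevalue}: a \texttt{Yes} answer (probability $s_{w_x}/S$) removes the non-matching items and hence yields gain $S-s_{w_x}$, while a \texttt{No} answer (probability $(S-s_{w_x})/S$) removes the matching items and yields gain $s_{w_x}$. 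Substituting these into Eq.~(\ref{eqn:attributevalue}) collapses to $\Psi_\mathtt{CG}(\mathtt{query}(x)=w_x)=\tfrac{2\,s_{w_x}(S-s_{w_x})}{S}$.

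Finally I would form the difference and clear the positive factor $S$. Writing $s_0\coloneqq s_{w_x}$ and using $S^2=s_0^2+2s_0(S-s_0)+(S-s_0)^2$ together with $\sum_w s_w^2=s_0^2+\sum_{w\neq w_x}s_w^2$, the scaled difference $S\cdot\bigl(\Psi_\mathtt{CG}(\mathtt{query}(x))-\Psi_\mathtt{CG}(\mathtt{query}(x)=w_x)\bigr)$ simplifies to $(S-s_0)^2-\sum_{w\neq w_x}s_w^2$. Since $S-s_0=\sum_{w\neq w_x}s_w$ is a sum of nonnegative terms, $(S-s_0)^2=\bigl(\sum_{w\neq w_x}s_w\bigr)^2\ge\sum_{w\neq w_x}s_w^2$, which closes the argument. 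The proof is essentially a calculation, so the only real obstacle is keeping the bookkeeping of the complementary item sets straight (the \texttt{Yes}/\texttt{No} gains are paired with the opposite set from the one being removed); conceptually the inequality is the familiar fact that the finer partition into $|\mathcal{W}_x|$ cells achieves at least as much expected impurity (variance) reduction as the binary split induced by a single value $w_x$.
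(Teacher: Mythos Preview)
Your proof is correct, and it takes a somewhat different route from the paper's. The paper never writes the two gains in closed form; instead it splits the attribute-query sum $\sum_{w}\Psi_\mathtt{CG}(w=w^*_x)\,\mathtt{Pr}(w=w^*_x)$ into the $w_x$-term and the remaining terms, and then bounds each remaining term using the observation that $\Psi_\mathtt{CG}(w'=w^*_x)=\sum_{w''\neq w'}\Psi_\mathtt{CG}(w''\neq w^*_x)\geq \Psi_\mathtt{CG}(w_x\neq w^*_x)$ (a sum of nonnegative quantities dominates any one of them). Summing these termwise bounds collapses exactly to $\Psi_\mathtt{CG}(\mathtt{query}(x)=w_x)$. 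By contrast, you compute both sides explicitly as $S-\tfrac{1}{S}\sum_w s_w^2$ and $\tfrac{2s_0(S-s_0)}{S}$, and reduce their difference to $(\sum_{w\neq w_x}s_w)^2\ge\sum_{w\neq w_x}s_w^2$. The paper's argument keeps the semantics visible (knowing the exact value checks at least as much as a single \texttt{No}), and makes the equality case $|\mathcal{W}_x|=2$ transparent; your calculation is cleaner algebra and makes the link to impurity/variance reduction (fine partition vs.\ binary split) explicit, with equality exactly when the $s_w$ for $w\neq w_x$ are supported on a single value. Both rely only on nonnegativity of the relevance scores.
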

\begin{proof}
For consistency, we re-formulate Eq.~(\ref{eqn:ceattribute1}) as:
\begin{equation}
\Psi_\mathtt{CG}(\mathtt{query}(x))=\sum_{w_x\in\mathcal{W}_x}\Big(
\Psi_\mathtt{CG}(w_x=w_x^*)\cdot \mathtt{Pr}(w_x=w^*_x)\Big),
\end{equation}
where $x$ is the queried attribute, and $w^*_x$ represents the user preference on $x$ (corresponding to the notations $a$ and $w^*_a$ respectively).
We can also re-write $\Psi_\mathtt{CG}(w_x = w^*_x)$ as:
\begin{equation}
\begin{aligned}
\Psi_\mathtt{CG}(w'_x = w^*_x) =& \mathtt{SUM}(\{\Psi_\mathtt{RE}(v_m)|v_m\in\mathcal{V}_{k-1}\cap\mathcal{V}_{x_\mathtt{value}\neq w'_x}\})\\
=& \sum_{w''_x\in\mathcal{W}_x\backslash \{w'_x\}} \Big(\mathtt{SUM}(\{\Psi_\mathtt{RE}(v_m)|v_m\in\mathcal{V}_{k-1}\cap \mathcal{V}_{x_\mathtt{value}=w''_x}\})\Big)\\
=&\sum_{w''_x\in\mathcal{W}_x\backslash \{w'_x\}}\Psi_\mathtt{CG}(w''_x\neq w^*_x)\geq \Psi_\mathtt{CG}(w_x\neq w^*_x),
\end{aligned}
\end{equation}
where $w_x$ is an arbitrary attribute value in $\mathcal{W}_x\backslash \{w'_x\}$.
The above equation is built upon the simple fact that after an attribute $x$,  the answer of the user preferring $w'_x$ is equivalent to the answer of the user not preferring all the other $w''_x$s, which can remove all the unchecked items whose value is equal to any $w''_x$.
Thus, the expected certainty gain of knowing the user preferring $w'_x$ is not smaller than knowing the user not preferring any one $w_x\in \mathcal{W}_x\backslash \{w'_x\}$, and the equality holds only in the case where $\mathcal{W}_x=\{w_x,w'_x\}$, namely there are only two candidate attribute values.

Based on the above equations, we can derive:
\begin{equation}
\begin{aligned}
&\Psi_\mathtt{CG}(\mathtt{query}(x))=\sum_{w_x\in\mathcal{W}_x}\Big(
\Psi_\mathtt{CG}(w_x=w_x^*)\cdot \mathtt{Pr}(w'_x=w^*_x)\Big)\\
=&\Psi_\mathtt{CG}(w_x=w_x^*)\cdot \mathtt{Pr}(w_x=w_x^*) + \sum_{w'_x\in\mathcal{W}_x\backslash\{w_x\}}\Big(
\Psi_\mathtt{CG}(w'_x=w_x^*)\cdot \mathtt{Pr}(w'_x=w^*_x)\Big)\\
\geq & \Psi_\mathtt{CG}(w_x=w_x^*)\cdot \mathtt{Pr}(w_x=w_x^*) + \sum_{w'_x\in\mathcal{W}_x\backslash\{w_x\}}\Big(
 \Psi_\mathtt{CG}(w_x\neq w^*_x)\cdot \mathtt{Pr}(w'_x=w^*_x)\Big)\\
\geq & \Psi_\mathtt{CG}(w_x=w_x^*)\cdot \mathtt{Pr}(w_x=w_x^*) + \Psi_\mathtt{CG}(w_x\neq w^*_x)\cdot  \sum_{w'_x\in\mathcal{W}_x\backslash\{w_x\}}\Big(\mathtt{Pr}(w'_x=w^*_x)\Big)\\
\geq & \Psi_\mathtt{CG}(w_x=w_x^*)\cdot \mathtt{Pr}(w_x=w_x^*) + \Psi_\mathtt{CG}(w_x\neq w^*_x)\cdot \mathtt{Pr}(w_x\neq w^*_x)\\
\geq & \Psi_\mathtt{CG}(\mathtt{query}(x)=w_x).
\end{aligned}
\end{equation}
Since we put no constraint on $x\in\mathcal{X}_{k-1}$, thus it proves the proposition.
\end{proof}

\begin{proposition}
In the context of querying attribute values, an ideal choice is always the one that can partition all the unchecked relevance scores into two equal parts (i.e., the ideal $w_x\in\mathcal{W}_x, x\in\mathcal{X}_{k-1}$ is the one that makes $\Psi_\mathtt{CG}(w_x=w^*_x)=\mathtt{SUM}(\{\Psi_\mathtt{RE}(v_m)|v_m\in\mathcal{V}_{k-1}\})/2$ hold), if it is achievable.
And the certainty gain in this case is $\Psi_\mathtt{CG}(\mathtt{query}(x)=w_x)=\mathtt{SUM}(\{\Psi_\mathtt{RE}(v_m)|v_m\in\mathcal{V}_{k-1}\})/2$.
\end{proposition}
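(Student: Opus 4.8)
The plan is to reduce the expected certainty gain of querying a value to a single quadratic in one scalar and then maximize that quadratic. First I would introduce two shorthands: let $S=\mathtt{SUM}(\{\Psi_\mathtt{RE}(v_m)\mid v_m\in\mathcal{V}_{k-1}\})$ be the total unchecked relevance mass, and let $A=\mathtt{SUM}(\{\Psi_\mathtt{RE}(v_m)\mid v_m\in\mathcal{V}_{k-1}\cap\mathcal{V}_{x_\mathtt{value}=w_x}\})$ be the mass carried by the items whose value of $x$ equals the queried value $w_x$. With this notation, Eq.~(\ref{eqn:ceattribute2}) gives $\Psi_\mathtt{CG}(w_x=w^*_x)=S-A$ (the mass of non-matching items, which get checked when the user says \texttt{Yes}), while its counterpart with $\mathcal{V}_{x_\mathtt{value}=w_x}$ gives $\Psi_\mathtt{CG}(w_x\neq w^*_x)=A$. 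Likewise Eq.~(\ref{eqn:ceattribute3}) gives $\mathtt{Pr}(w_x=w^*_x)=A/S$ and $\mathtt{Pr}(w_x\neq w^*_x)=(S-A)/S$.

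Substituting these four quantities into Eq.~(\ref{eqn:attributevalue}) collapses the expected certainty gain to
\[
\Psi_\mathtt{CG}(\mathtt{query}(x)=w_x)=(S-A)\cdot\frac{A}{S}+A\cdot\frac{S-A}{S}=\frac{2A(S-A)}{S}.
\]
Next I would maximize this over the feasible values of $A$. By AM--GM (equivalently, completing the square), $A(S-A)\le(S/2)^2$ with equality exactly when $A=S-A$, i.e. $A=S/2$; hence $\Psi_\mathtt{CG}(\mathtt{query}(x)=w_x)\le S/2$ with equality iff $A=S/2$. Finally I would translate the optimality condition back into the language of the statement: $A=S/2$ is precisely $\Psi_\mathtt{CG}(w_x=w^*_x)=S-A=S/2$, i.e. the queried value splits the unchecked relevance mass into two equal halves, and the attained gain is $S/2$, as claimed.

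The one point that needs care --- and the reason for the ``if it is achievable'' qualifier --- is that $A$ does not range over the whole continuum $[0,S]$: it takes only the discrete values realized by the candidate $w_x\in\mathcal{W}_x$, each determining a concrete subset $\mathcal{V}_{x_\mathtt{value}=w_x}$. I would therefore emphasize that the quadratic $\tfrac{2A(S-A)}{S}$ is strictly concave and symmetric about $A=S/2$, so among the achievable values the gain is maximized by the one whose $A$ lies closest to $S/2$, and the exact equality $\Psi_\mathtt{CG}=S/2$ is reached only when some candidate value realizes an exact equal split. This discreteness is the main (and essentially only) obstacle, and it is dispatched entirely by the concavity/symmetry observation rather than by any additional computation.
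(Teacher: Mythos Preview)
Your proof is correct and follows essentially the same route as the paper: both reduce $\Psi_\mathtt{CG}(\mathtt{query}(x)=w_x)$ to the single-variable expression $2A(S-A)/S$ (the paper writes it with $R_\mathtt{YES}=S-A$ and $R=S$) and then maximize this quadratic, the paper via setting the derivative to zero and you via AM--GM. Your added remark on the discreteness of the feasible $A$-values and the concavity/symmetry argument is a nice clarification of the ``if it is achievable'' caveat that the paper leaves implicit.
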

\begin{proof}
Without loss of generalizability, in the context of querying attribute values, we recap the formulation of $\Psi_\mathtt{CG}(\mathtt{query}(x)=w_x)$, shown in Eq.~(\ref{eqn:attributevalue}) as:
\begin{equation}
\begin{aligned}
&\Psi_\mathtt{CG}(\mathtt{query}(x)=w_x) = \Psi_\mathtt{CG}(w_x=w^*_x)\cdot \mathtt{Pr}(w_x=w^*_x) + \Psi_\mathtt{CG}(w_x\neq w^*_x) \cdot \mathtt{Pr}(w_x\neq w^*_x)\\
=& \mathtt{SUM}(\{\Psi_\mathtt{RE}(v_m)|v_m\in\mathcal{V}_{k-1}\cap\mathcal{V}_{a_\mathtt{value}\neq w_a}\})\cdot \frac{\mathtt{SUM}(\{\Psi_\mathtt{RE}(v_{m})|v_{m}\in\mathcal{V}_{k-1}\cap\mathcal{V}_{a_\mathtt{value}=w_a}\})}{\mathtt{SUM}(\{\Psi_\mathtt{RE}(v_m)|v_m\in\mathcal{V}_{k-1}\})}\\
& +\mathtt{SUM}(\{\Psi_\mathtt{RE}(v_m)|v_m\in\mathcal{V}_{k-1}\cap\mathcal{V}_{a_\mathtt{value}= w_a}\})\cdot \frac{\mathtt{SUM}(\{\Psi_\mathtt{RE}(v_{m})|v_{m}\in\mathcal{V}_{k-1}\cap\mathcal{V}_{a_\mathtt{value}\neq w_a}\})}{\mathtt{SUM}(\{\Psi_\mathtt{RE}(v_m)|v_m\in\mathcal{V}_{k-1}\})}\\
=& R_\mathtt{YES} \cdot \frac{
R-R_\mathtt{YES}}{R} + (R-R_\mathtt{YES}) \cdot \frac{R_\mathtt{YES}}{R},
\end{aligned}
\end{equation}
where we use $R_\mathtt{YES}$ to denote $\mathtt{SUM}(\{\Psi_\mathtt{RE}(v_m)|v_m\in\mathcal{V}_{k-1}\cap\mathcal{V}_{a_\mathtt{value}\neq w_a}\})$, the expected certainty gain of the event $w_x=w^*_x$ happening (i.e., the user answers \texttt{Yes} to querying $w_x$), and use $R$ to denote the summation of relevance scores of all the unchecked items, i.e., $\mathtt{SUM}(\{\Psi_\mathtt{RE}(v_m)|v_m\in\mathcal{V}_{k-1}\})$.
For convenience, we use $\Psi$ to denote $\Psi_\mathtt{CG}(\mathtt{query}(x)=w_x)$.
Then, $\Psi$ can be regarded as a function of $R_\mathtt{YES}$, where $R_\mathtt{YES}$ is the independent variable and $\Psi$ is the dependent variable.

To maximize $\Psi$, we have:
\begin{equation}
\frac{\partial \Psi}{\partial R_\mathtt{YES}} = \frac{2}{R}\cdot (R-2\cdot R_\mathtt{YES})=0.
\end{equation}
Therefore, we have $R_\mathtt{Yes}=R/2$, and in this case, $\Psi=R/2$.
Then, we can reach the conclusion that the ideal partition is the one dividing all the unchecked relevance scores, i.e., $R$, into two equal parts; and in this case, $\Psi_\mathtt{CG}(\mathtt{query}(x)=w_x)=R/2=\mathtt{SUM}(\{\Psi_\mathtt{RE}(v_m)|v_m\in\mathcal{V}_{k-1}\})/2$, which indicates that querying $w_x$ can  check half of the relevance scores in expectation.
\end{proof}
\begin{lemma}
In the context of querying attribute values, suppose that $\Psi_\mathtt{RE}(v_m)=1/M$ holds for any $v_m\in\mathcal{V}$, then the expected number of turns (denoted as $\widehat{K}$) is bounded by
$\log^{M+1}_2 \leq \widehat{K} \leq (M+1)/2$.
\end{lemma}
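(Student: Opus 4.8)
The plan is to treat the cold-start assumption as a drastic simplification: since $\Psi_\mathtt{RE}(v_m)=1/M$ for every item, the uncertainty at turn $k$ is simply $\mathtt{U}_k=|\mathcal{V}_k|/M$, and every probability appearing in Eq.~(\ref{eqn:ceattribute3}) and in $\mathtt{Pr}(a\in\mathcal{V}^*)$ collapses to a ratio of cardinalities (e.g.\ querying a single item succeeds with probability $1/|\mathcal{V}_{k-1}|$). Thus the whole question reduces to counting how fast the agent can shrink the set of unchecked items. Since every admissible action---querying an item or querying an attribute value---elicits a binary \texttt{Yes}/\texttt{No} answer, a complete session is a root-to-leaf path in a binary decision tree, and this is the object I will bound from both sides.

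For the upper bound I would analyze the ``bad case'' explicitly. When no attribute value induces a nontrivial split (each candidate value is shared by all, or by exactly one, unchecked item), the best the agent can do per turn is eliminate a single item, i.e.\ check items one at a time. Under the uniform scores, the worst situation is a single target, equally likely to occupy any of the $M$ positions in the probing order, so the expected number of probes to reach it is $\sum_{i=1}^{M} i\cdot\frac1M=(M+1)/2$. Because querying an item is always an available action and \texttt{CORE} selects the action of maximal expected certainty gain, it can never do strictly worse than this sequential baseline; hence $\widehat K\le (M+1)/2$.

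For the lower bound I would use the ``good case'' together with Proposition~\ref{theorem:ideal}. That proposition shows the maximal certainty gain obtainable in one turn is exactly half the current uncertainty, so in the best case each turn at most halves the number of still-relevant items---equivalently, each turn contributes at most one bit of resolution. Modelling the run as a binary tree, to single out the target among the $M$ items and terminate with a confirming item query, the tree must contain at least $M+1$ distinguishable terminal configurations; a binary tree realizing $M+1$ leaves has depth at least $\log_2(M+1)$, which yields $\widehat K\ge \log_2(M+1)$.

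The main obstacle is the rigor of the lower bound: I must translate the per-turn ``halve the uncertainty'' statement of Proposition~\ref{theorem:ideal} into a clean counting bound, pin down the off-by-one constant (whether one counts leaves, internal query nodes, or whether the final terminal item query is what pushes $M$ up to $M+1$), and argue the inequality at the level of the expectation rather than along a single worst-case path. A secondary point is to confirm that the item-by-item ``bad case'' genuinely maximizes the number of turns, so that no partitioning strategy can overshoot $(M+1)/2$ once the greedy certainty-gain rule of Eq.~(\ref{eqn:action1}) is in force.
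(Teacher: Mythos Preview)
Your proposal is correct and follows essentially the same approach as the paper. The paper likewise obtains the upper bound by computing the expected length of item-by-item probing as $\sum_{i=1}^{M} i\cdot\frac1M=(M+1)/2$, and the lower bound by invoking Proposition~\ref{theorem:ideal} (each ideal query halves the unchecked mass) and summing the geometric series $1+2+\cdots+2^{\widehat K-1}=M$ to get $\widehat K=\log_2(M+1)$; the rigor concerns you flag about the off-by-one and about bounding the expectation rather than a single path are not addressed in the paper's argument either.
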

\begin{proof}
We begin by considering the best case.
According to Proposition~\ref{theorem:ideal}, if we can find an attribute value $w_x$, where querying $w_x$ can partition the unchecked relevance scores into two equal parts, then we can build a binary tree, where we can check $M/2^k$ at the $k$-th turn.
Therefore, we have:
\begin{equation}
\label{eqn:turn1}
1+2+\cdots+2^{\widehat{K}-1}=M,
\end{equation}
which derives $\widehat{K}=\log^{M+1}_2$.
In the worst case, we only can query one item during one query.
Then, the expected number of turns is:
\begin{equation}
\label{eqn:turn2}
\widehat{K}=1\cdot \frac{1}{M}+2\cdot (1-\frac{1}{M})\cdot \frac{1}{M-1} +\cdots+M\cdot \prod^{M-1}_{i=0}(1-\frac{1}{M-i})\cdot 1=\frac{M+1}{2}.
\end{equation}
Combining Eqs.~(\ref{eqn:turn1}) and (\ref{eqn:turn2}) together, we can draw $\log^{M+1}_2 \leq \widehat{K} \leq (M+1)/2$.
\end{proof}

\section{Plugging the Conversational Agent into Recommender Systems}
\label{app:detail}
\subsection{$\Psi_\mathtt{CG}(\cdot)$ for Querying Attributes in Large Discrete or Continuous Space}
\label{app:feature}
The main idea of our conversational agent is to recursively query the user to reduce uncertainty.
The core challenge is that there exist some cases where querying any attribute values or items can not effectively reduce uncertainty.
Most of these cases occur when some key attributes have a large discrete space or a continuous space, leading to a broad decision tree.  
Formally, for a key attribute $x\in\mathcal{X}_{k-1}$, a ``small'' discrete space usually means $|\mathcal{W}_x|\ll |\mathcal{V}_{k-1}|$.
For example, for attribute \texttt{Hotel} \texttt{Price}, then querying $x$, the user would not respond with an accurate value, and querying $x$=one possible value could be ineffective.

To address this issue, we propose to find a $w_x\in\mathbb{R}$ instead of $w_x\in\mathcal{W}_x$, and then we can query whether the user's preference is not smaller than it or not, i.e., $\mathtt{query}(x)\geq w_x$ instead of whether the user's preference is equal to $w_x$ or not, i.e., $\mathtt{query}(x)= w_x$.
Then, the expected certainty gain, in this case, can be written as:
\begin{equation}
\Psi_\mathtt{CG}(\mathtt{query}(x)\geq w_x) = \Psi_\mathtt{CG}(w_x \geq w^*_x)\cdot \mathtt{Pr}(w_x \geq w^*_x) + \Psi_\mathtt{CG}(w_x < w^*_x) \cdot \mathtt{Pr}(w_x < w^*_x),
\end{equation}
where
\begin{equation}
\begin{aligned}
\Psi_\mathtt{CG}(w_x\geq w^*_x)&=\mathtt{SUM}(\{\Psi_\mathtt{RE}(v_m)|v_m\in\mathcal{V}_{x<w_x}\cap \mathcal{V}_{k-1}\}),\\
\Psi_\mathtt{CG}(w_x < w^*_x)&=\mathtt{SUM}(\{\Psi_\mathtt{RE}(v_m)|v_m\in\mathcal{V}_{x\geq w_x}\cap\mathcal{V}_{k-1}\}),
\end{aligned}
\end{equation}
where $\mathcal{V}_{x\geq w_x}$ is the set of items whose value of attribute $x$ is not smaller than $w_x$, and $\mathcal{V}_{x<w_x}$ is the set of the rest items, namely $\mathcal{V}_{x\geq w_x}\cup\mathcal{V}_{x<w_x}=\mathcal{V}_{k-1}$; and
\begin{equation}
\begin{aligned}
\mathtt{Pr}(w_x \geq w^*_x)&=\frac{\mathtt{SUM}(\{\Psi_\mathtt{RE}(v_m)|v_m\in\mathcal{V}_{x\geq w_x}\cap\mathcal{V}_{k-1}\})}{\mathtt{SUM}(\{\Psi_\mathtt{RE}(v_{m'})|v_{m'}\in\mathcal{V}_{k-1}\})},\\
\mathtt{Pr}(w_x< w^*_x)&=\frac{\mathtt{SUM}(\{\Psi_\mathtt{RE}(v_m)|v_m\in\mathcal{V}_{x< w_x}\cap\mathcal{V}_{k-1}\})}{\mathtt{SUM}(\{\Psi_\mathtt{RE}(v_{m'})|v_{m'}\in\mathcal{V}_{k-1}\})}.
\end{aligned}
\end{equation}
Therefore, the same as $\mathtt{query}(x)= w_x$, $\mathtt{query}(x)\geq w_x$ also divides the unchecked items into two parts, and the user is supposed to answer \texttt{Yes} or \texttt{No}, corresponding to either one of the two parts.
Then, Proposition~\ref{theorem:ideal} also works here.
Namely, for each attribute $x\in\mathcal{X}_{k-1}$, the oracle $w_x$, denoted as $w^\mathtt{O}_x$, is the one that can partition the relevance scores into two equal parts.
Formally, we have:
\begin{equation}
w^\mathtt{O}_x=\argmin_{w_x\in\mathbb{R}}\Big\|\mathtt{SUM}(\{\Psi_\mathtt{RE}(v_m)|v_m\in\mathcal{V}_{x\geq w_x}\cap\mathcal{V}_{k-1}\})-\frac{\mathtt{SUM}(\{\Psi_\mathtt{RE}(v_{m'})|v_{m'}\in\mathcal{V}_{k-1}\})}{2}\Big\|.
\end{equation}
Since it is infeasible to find an exact oracle one, we approximate $w^\mathtt{O}_x$ as:
\begin{equation}
w_x=\mathtt{AVERAGE}(\{\Psi_\mathtt{RE}(v_m)\cdot w_{v_m}|v_m\in\mathcal{V}_{k-1}\}),
\end{equation}
where $w_{v_m}$ is the value of attribute $x$ in item $v_m$.
It indicates that our estimation is the average of the attribute values for the items in $\mathcal{V}_{k-1}$ weighted by their relevance scores.

\subsection{Making $\Psi_\mathtt{CG}(\cdot)$ Consider Dependence among Attributes}
\label{app:dependence}
The following techniques allow \texttt{CORE} to take the dependence among attributes into account.
We provide two ways, where one requires an FM-based recommender system, while the other one poses no constraint.

Taking $\Psi_\mathtt{CG}(\cdot)$ in Eq.~(\ref{eqn:ceattribute1}) as an example, we re-formulate Eq.~(\ref{eqn:ceattribute1}) as, when $a\in\mathcal{X}_{k-1}$, we compute $\Psi_\mathtt{CG}(\mathtt{query}(a))$ as:
\begin{equation}
\label{eqn:appda}
\Psi^\mathtt{D}_\mathtt{CG}(\texttt{query}(a)) = \sum_{a'\in\mathcal{A}}\Big(\Psi_\mathtt{CG}(\texttt{query}(a'))\cdot \mathtt{Pr}(\mathtt{query}(a')|\mathtt{query}(a))\Big),
\end{equation}
where we use $\Psi^\mathtt{D}_\mathtt{CG}(\cdot)$ to denote this variant of $\Psi_\mathtt{CG}(\cdot)$.

\textbf{Estimation from a Pre-trained FM-based Recommender System.}
If our recommender system applies a factorization machine (FM) based recommendation approach, then we can directly adopt the learned weights as the estimation of $\mathtt{Pr}(\mathtt{query}(a')|\mathtt{query}(a))$ in Eq.~(\ref{eqn:appda}).
Taking DeepFM \citep{guo2017deepfm} as an example, we begin by recapping its FM component:
\begin{equation}
y_\mathtt{FM}=w_0+\sum^N_{n=1}w_nx_n+\sum^N_{n=1}\sum^N_{n'=n+1}\langle \mathbf{v}_n,\mathbf{v}_{n'}\rangle x_n x_{n'},
\end{equation}
where the model parameters should be estimated in the recommender system (in line~\ref{line:initial} in Algorithm~\ref{algo:model}), including  $w_0\in\mathbb{R}$, $\mathbf{w}\in\mathbb{R}^N$, $\mathbf{V}\in\mathbb{R}^{N\times D}$ and $D$ is the dimension of embedding.
And, $\langle \cdot,\cdot\rangle$ is the dot product of two vectors of size $d$, defined as $\langle \mathbf{v}_i,\mathbf{v}_j\rangle=\sum^D_{d=1}v_{id}\cdot v_{jd}$.
In this regard, for each pair of attributes (e.g., $(a, a')$ in Eq.~(\ref{eqn:appda})), we can find the corresponding $\langle \mathbf{v}_n,\mathbf{v}_{n'}\rangle$ as the estimation of $\mathtt{Pr}(\mathtt{query}(a')|\mathtt{query}(a))$. 

\textbf{Estimation in a Statistical Way.}
If applying any other recommendation approach to the recommender system, we design a statistical way.
We first decompose $\Psi^\mathtt{D}_\mathtt{CG}(\texttt{query}(a))$ according to Eq.~(\ref{eqn:ceattribute1}):
\begin{equation}
\label{eqn:appstatistic}
\Psi^\mathtt{D}_\mathtt{CG}(\texttt{query}(a))=\sum_{w_a\in\mathcal{W}_a}\Big(
\Psi^\mathtt{D}_\mathtt{CG}(w_a=w^*_a)\cdot \mathtt{Pr}(w_a=w^*_a)\Big),
\end{equation}
where we define $\Psi^\mathtt{D}_\mathtt{CG}(w_a=w^*_a)$ as:
\begin{equation}
\label{eqn:statistic1}
\begin{aligned}
\Psi^\mathtt{D}_\mathtt{CG}(w_a=w^*_a)
=\sum_{a'\in\mathcal{A}}\sum_{w_{a'}\in\mathcal{W}_{a'}}\Big(\Psi_\mathtt{CG}(w_{a'}=w^*_{a'})\cdot \mathtt{Pr}(w_{a'}=w^*_{a'}|w_a=w^*_a)\Big),
\end{aligned}
\end{equation}
where $\mathtt{Pr}(w_{a'}=w^*_{a'}|w_a=w^*_a)$ measures the probability of how likely getting the user's preference on attribute $a$ (i.e., $w_a=w^*_a$) determinates the user's preference on other attributes (i.e., $w_{a'}=w^*_{a'}$).
For example, in Figure~\ref{fig:overview}, if the user's preference on attribute \texttt{Hotel} \texttt{Level} and the answer is 5 (i.e., $a$ is \texttt{Hotel} \texttt{Level}, $w_a$ is 5 and the user's answer is \texttt{Yes}), then we could be confident to say that the user preference on attribute \texttt{Shower} \texttt{Service} is \texttt{Yes} (i.e., $a'$ is \texttt{Shower} \texttt{Service}, $w_{a'}$ is \texttt{Yes}, and the user's answer is \texttt{Yes}), i.e., $\mathtt{Pr}(w_{a'}=w^*_{a'}|w_a=w^*_a)$ is close to 1.

We estimate $\mathtt{Pr}(w_{a'}=w^*_{a'}|w_a=w^*_a)$ by using the definition of the conditional probability:
\begin{equation}
\label{eqn:statistic2}
\mathtt{Pr}(w_{a'}=w^*_{a'}|w_a=w^*_a)=\frac{|\mathcal{V}_{(a_\mathtt{value}=w_a)\wedge(a'_\mathtt{value}=w_{a'})}\cap\mathcal{V}_{k-1}|}{|\mathcal{V}_{a_\mathtt{value}=w_a}\cap\mathcal{V}_{k-1}|},
\end{equation}
where $\mathcal{V}_{a_\mathtt{value}=w_a}$ is the set of items whose value of $a$ equals $w_a$, and $\mathcal{V}_{(a_\mathtt{value}=w_a)\wedge(a'_\mathtt{value}=w_{a'})}$ is the set of items whose value of $a$ equals $w_a$ and value of $a'$ equals $w_{a'}$.
By incorporating Eqs.~(\ref{eqn:statistic1}) and (\ref{eqn:statistic2}) into Eq.~(\ref{eqn:appstatistic}), we can compute $\Psi^\mathtt{D}_\mathtt{CG}(\texttt{query}(a))$ for any $a\in\mathcal{X}_{k-1}$.

\textbf{Extensions to Other Cases.}
Besides querying attributes, we also introduce another querying strategy to query attribute values.
Formally, we can have:
\begin{equation}
\Psi^\mathtt{D}_\mathtt{CG}(\mathtt{query}(x)=w_a)=\Psi^\mathtt{D}_\mathtt{CG}(w_x=w^*_x)\cdot \mathtt{Pr}(w_x=w^*_x) + \Psi^\mathtt{D}_\mathtt{CG}(w_x\neq w^*_x) \cdot \mathtt{Pr}(w_x\neq w^*_x),
\end{equation}
where $\Psi^\mathtt{D}_\mathtt{CG}(w_x=w^*_x)$ can be computed by Eq.~(\ref{eqn:statistic1}), and the formulation of $\Psi^\mathtt{D}_\mathtt{CG}(w_x\neq w^*_x)$ could be directly extended from Eq.~(\ref{eqn:statistic1}) by replacing $\Psi_\mathtt{CG}(w_x = w^*_x)$ with $\Psi_\mathtt{CG}(w_x \neq w^*_x)$, and replacing $\mathtt{Pr}(w_{a'}=w^*_{a'}|w_a=w^*_a)$ with $\mathtt{Pr}(w_{a'}\neq w^*_{a'}|w_a\neq w^*_a)$.
$\mathtt{Pr}(w_{a'}\neq w^*_{a'}|w_a\neq w^*_a)$ could be computed by replacing $\mathcal{V}_{a_\mathtt{value}=w_a}$ with  $\mathcal{V}_{a_\mathtt{value}\neq w_a}$, and replacing $\mathcal{V}_{(a_\mathtt{value}=w_a)\wedge(a'_\mathtt{value}=w_{a'})}$ with $\mathcal{V}_{(a_\mathtt{value}\neq w_a)\wedge(a'_\mathtt{value}\neq w_{a'})}$.
$\mathcal{V}_{a_\mathtt{value}\neq w_a}$ is the set of items whose value of $a$ does not equal $w_a$, and $\mathcal{V}_{(a_\mathtt{value}\neq w_a)\wedge(a'_\mathtt{value}\neq w_{a'})}$ is the set of items whose value of $a$ does not equal $w_a$ and value of $a'$ does not equal $w_{a'}$.

Then, we have made our conversational agent consider the dependence among attributes for cases (ii) and (iii), summarized in Definition~\ref{def:appinteract}.
There is no need to consider the dependence in case (i), and, as concluded in Appendix~\ref{app:theory}, (iv) can be regarded as a special engineering technique in (iii), and thus, one just needs to follow the same way to handle case (iv).

\subsection{Overall Algorithm}
\label{app:overall}
We summarize the overall algorithm in Algorithm~\ref{algo:model}.
\texttt{CORE} follows an offline-training-and-online-checking paradigm, where offline-training is represented in lines~\ref{line:initial} and \ref{line:train}, and online-checking is represented in lines~\ref{line:core}, \ref{line:key} and \ref{line:key1}.

As shown in line~\ref{line:core}, there are two querying settings, i.e., querying items and attributes, and querying items and attribute values.
We note that querying attributes and querying attribute values can be compatible, but can not simultaneously operate on the same attribute.
We recap that Proposition~\ref{pro:queryattribute} says that for each attribute, assuming users could give a clear answer showing their preference, querying an attribute can always obtain certainty gain not smaller than querying any attribute value of the attribute.

Therefore, in practice, we would select those attributes that are likely to receive a clear preference from users (e.g., attributes \texttt{Category}, \texttt{Brand}) in the setting of querying items and attributes, and use the rest of attributes (e.g., attribute \texttt{Price}) in the setting of querying items and attribute values.  
Also, as stated at the end of Appendix~\ref{app:theory}, we can further select several attributes with a small space of attribute values, use them in the setting of querying items and attributes, and list all the candidate attribute values in the queries.
In this regard, for any attribute, since the space of attribute values is changing in the context of querying attribute values, then we may transfer from the setting of querying attribute values to querying attributes, when there are few unchecked candidate attribute values.

All the above operations need careful feature engineering, which should be task-specific and dataset-specific.
We argue that this is out of the scope of this paper, and we leave it for future work.

\section{Experimental Configuration}
\label{app:config}
\subsection{Dataset Descriptions and Data Pre-processing}
\label{app:data}
We summarize the datasets used in this paper as follows.
\begin{itemize}[topsep = 0pt,itemsep=-1pt,leftmargin =10pt]
\item \textbf{Amazon} dataset \citep{amazon,mcauley2016addressing} is a dataset collected by Amazon from May 1996 to July 2014. There are 1,114,563 reviews of 133,960 users and 431,827 items and 6 attributes.
\item \textbf{LastFM} dataset \citep{lastfm} is a dataset collected from Lastfm, a music artist recommendation platform. There are 76,693 interactions of 1,801 users and 7,432 items and 33 attributes.
\item \textbf{Yelp} dataset \citep{yelp} is a dataset collected from Yelp, a business recommendation platform. There are 1,368,606 interactions of 27,675 users and 70,311 items and 590 attributes.
We follow \citep{lei2020estimation} to create 29 (parents) attributes upon 590 original attributes, and we use the newly created ones in our experiments. 
\item \textbf{Taobao} dataset \citep{taobao} is a dataset collected by Taobao from November 2007 to December 2007. It consists of 100,150,807 interactions of 987,994 users and 4,162,024 items with an average sequence length of 101 and 4 attributes.
\item \textbf{Tmall} dataset \citep{tmall} is a dataset collected by Tmall from May 2015 to November 2015. It consists of 54,925,331 interactions of 424,170 users and 1,090,390 items with an average length of 129 and 9 attributes.
\item \textbf{Alipay} dataset \citep{alipay} is a dataset collected by Alipay, from July 2015 to November 2015. There are 35,179,371 interactions of 498,308 users and 2,200,191 items with an average sequence length of 70 and 6 attributes.
\item \textbf{Douban Movie} dataset \citep{touban,zhu2019dtcdr} is a dataset collected from Douban Movie, a movie recommendation platform. There are 1,278,401 interactions of 2,712 users and 34,893 items with 4 attributes.
\item \textbf{Douban Book} dataset \citep{touban,zhu2019dtcdr} is a dataset collected from Douban Book, a book recommendation platform. There are 96,041 interactions of 2,110 users and 6,777 items with 5 attributes.
\end{itemize}
In summary, our paper includes three tubular datasets (i.e., Amazon, LastFM, Yelp), three sequential datasets (i.e., Taobao, Tmall, Alipay), and two graph-structured datasets (i.e., Douban Book, Douban Movie). 
First, we follow the common setting of recommendation evaluation \citep{he2017neural,rendle2012bpr} that reduces the data sparsity by pruning the users that have less than 10 historical interactions and the users that have at least 1 positive feedback (e.g., clicks in Taobao).
We construct each session by sampling one user and 30 items from her browsing log (if less than 30 items, we randomly sample some items that are not browsed, as the items receive negative feedback, into the session).
During sampling, we manage the ratio of the number of items receiving positive feedback and the number of negative feedback fails into the range from 1:10 to 1:30.
We use a one-to-one mapping function to map all the attribute values into a discrete space to operate.
From those attributes with continuous spaces, we directly apply our proposed method introduced in Section~\ref{sec:attributevalue}.

\begin{table}[t]
	\centering
	\caption{Results comparison in the context of querying attributes and items on tabular datasets.}
        \vspace{1mm}
	\resizebox{1.00\textwidth}{!}{
		\begin{tabular}{@{\extracolsep{4pt}}cccccccccccccc}
			\toprule
			\multirow{2}{*}{$\Psi_\mathtt{RE}(\cdot)$} & \multirow{2}{*}{$\Psi_\mathtt{CO}(\cdot)$} & \multicolumn{4}{c}{Amazon} & \multicolumn{4}{c}{LastFM} & \multicolumn{4}{c}{Yelp} \\
			\cmidrule{3-6}
			\cmidrule{7-10}
			\cmidrule{11-14}
			{} & {} & T@3 & S@3 & T@5 & S@5 & T@3 & S@3 & T@5 & S@5 & T@3 & S@3 & T@5 & S@5 \\
			\midrule
			\multirow{5}{*}{\makecell[c]{COLD \\ START}} & AG & 
			6.47 & 0.12 & 7.83 & 0.23 & 
			6.77 & 0.05 & 8.32 & 0.14 & 
			6.65 & 0.08 & 8.29 & 0.13 \\ 
			{} & ME & 
			3.04 & 0.98 & 5.00 & 1.00 & 
			3.00 & 1.00 & 5.00 & 1.00 & 
			3.00 & 1.00 & 5.00 & 1.00 \\
                \cmidrule{2-14}
			{} & \texttt{CORE} & 
			\textbf{2.88} & \textbf{1.00} & \textbf{2.87} & \textbf{1.00} & 
			\textbf{2.73} & \textbf{1.00} & \textbf{2.75} & \textbf{1.00} & 
			\textbf{2.92} & \textbf{1.00} & \textbf{2.94} & \textbf{1.00} \\
                {} & \texttt{CORE}$^+_\texttt{D}$ & 
			2.84 & 1.00 & 2.86 & 1.00 & 
			2.74 & 1.00 & 2.73 & 1.00 & 
			2.90 & 1.00 & 2.91 & 1.00 \\
			\midrule
			\multirow{6}{*}{FM} & AG & 
			2.76 & 0.74 & 2.97 & 0.83 & 
			4.14 & 0.52 & 4.67 & 0.64 &
			3.29 & 0.70 & 3.39 & 0.81 \\
			{} & CRM & 
			3.07 & 0.98 & 3.37 & 0.83 & 
			2.98 & 0.99 & 3.43 & 1.00 &
			3.08 & 0.98 & 3.12 & 0.96 \\
			{} & EAR & 
			2.98 & 0.99 & 3.13 & 1.00 & 
			3.02 & 1.00 & 3.51 & 1.00 &
			2.94 & 1.00 & 3.02 & 0.99 \\
                \cmidrule{2-14}
			{} & \texttt{CORE} & 
			\textbf{2.17} & \textbf{1.00} & \textbf{2.16} & \textbf{1.00} & 
			\textbf{2.06} & \textbf{1.00} & \textbf{2.07} & \textbf{1.00} &
			\textbf{2.09} & \textbf{1.00} & \textbf{2.10} & \textbf{1.00} \\
			{} & \texttt{CORE}$^+_\mathtt{D}$ & 
			2.14 & 1.00 & 2.14 & 1.00 & 
			2.05 & 1.00 & 2.05 & 1.00 &
			2.10 & 1.00 & 2.08 & 1.00 \\
			\midrule
			\multirow{6}{*}{\makecell[c]{DEEP \\ FM}} & AG & 
			3.07 & 0.71 & 3.27 & 0.82 & 
			3.50 & 0.68 & 3.84 & 0.79 &
			3.09 & 0.74 & 3.11 & 0.88 \\
			{} & CRM & 
			2.68 & 0.99 & 2.99 & 0.99 & 
			2.94 & 0.99 & 3.05 & 0.99 &
			2.92 & 1.00 & 2.99 & 1.00 \\
			{} & EAR & 
			2.70 & 1.00 & 2.88 & 1.00 & 
			2.95 & 1.00 & 3.21 & 0.98 &
			2.87 & 1.00 & 2.97 & 1.00 \\
                \cmidrule{2-14}
			{} & \texttt{CORE} & 
			\textbf{2.07} & \textbf{1.00} & \textbf{2.06} & \textbf{1.00} &
			\textbf{2.07} & \textbf{1.00} & \textbf{2.08} & \textbf{1.00} &
			\textbf{2.06} & \textbf{1.00} & \textbf{2.07} & \textbf{1.00} \\
			{} & \texttt{CORE}$^+_\mathtt{D}$ & 
			2.08 & 1.00 & 2.02 & 1.00 & 
			2.05 & 1.00 & 2.03 & 1.00 &
			2.03 & 1.00 & 2.06 & 1.00 \\
                \midrule
			\multirow{4}{*}{PNN} & AG & 
			3.02 & 0.74 & 3.10 & 0.87 & 
			3.44 & 0.67 & 3.53 & 0.84 & 
			2.83 & 0.77 & 2.82 & 0.91\\
                \cmidrule{2-14}
			{} & \texttt{CORE} & 
			\textbf{2.71} & \textbf{1.00} & \textbf{3.00} & \textbf{1.00} & 
			\textbf{2.05} & \textbf{1.00} & \textbf{2.06} & \textbf{1.00} &
			\textbf{2.15} & \textbf{1.00} & \textbf{2.16} & \textbf{1.00} \\
			{} & \texttt{CORE}$^+_\mathtt{D}$ & 
			2.68 & 1.00 & 2.98 & 1.00 & 
			2.07 & 1.00 & 2.02 & 1.00 &
			2.08 & 1.00 & 2.11 & 1.00 \\
			\bottomrule
		\end{tabular}
	}
	\label{tab:appbinary}
	\vspace{-3mm}
\end{table}

\subsection{Simulator Design}
\label{app:simulation}
As summarized in Definition~\ref{def:appinteract}, there are two main agents in our simulator, namely a conversational agent and a user agent.
The conversational agent is given the set of candidate items (i.e., $\mathcal{V}$), and the set of candidate attributes (i.e., $\mathcal{X}$) (together with their candidate values, i.e., $\mathcal{W}_x$ for every $x\in\mathcal{X}$).
Then, at $k$-th turn, the conversational agent is supposed to provide an action of querying, either one from (i), (ii), (iii) and (iv) shown in Definition~\ref{def:appinteract}, and the user agent is supposed to generate the corresponding answer and derive the set of unchecked items (i.e., $\mathcal{V}_k$), and the set of unchecked attributes (i.e., $\mathcal{X}_k$) (together with the unchecked values of each attribute $x$).
Let $\mathcal{W}^k_x$ be the set of the unchecked values of $x$, then its update function is simple.
Firstly, we assign $\mathcal{W}^0_x=\mathcal{W}_x$, and we can further define $\Delta\mathcal{W}^k_x=\mathcal{W}^{k-1}_x \setminus \mathcal{W}^k_x$, then $\Delta\mathcal{W}^k_x$ can be written as:
\begin{equation}
\begin{aligned}
\Delta \mathcal{W}^k_x=\left\{
\begin{array}{cc}
\{w_x\}, &\text{querying (iii), and selecting an attribute value in $x$},\\
\emptyset, &\text{otherwise}.
\end{array}
\right.
\end{aligned}
\end{equation}
For simplicity, we omit the above update in the main text.

From the above description, we know that the conversational agent and the user agent are communicating through exchanging the set of unchecked items and unchecked attributes (and unchecked attribute values).
We also develop a port function in the conversational agent that leverages a pre-trained large language model to generate the human text for each action.  
See Appendix~\ref{app:chatbot} for detailed descriptions and examples.

\subsection{Baseline Descriptions}
\label{app:baseline}
We first summarize the recommendation approaches, denoted as $\Psi_\mathtt{RE}(\cdot)$, used in this paper as follows.
\begin{itemize}[topsep = 0pt,itemsep=-1pt,leftmargin =10pt]
\item \textbf{COLD START} denotes the cold-start setting, where all the relevance scores of items are uniformly generated.
In other words, for the item set $\mathcal{V}=\{v_m\}_{m=1}^M$, we set the relevance score for each item $v_m\in\mathcal{V}$ by $\Psi_\mathtt{RE}(v_m)=1/M$.
\item \textbf{FM} \citep{rendle2010factorization} is a factorization machine-based recommendation method working on tabular data, which considers the second-order interactions among attributes (i.e., feature fields).
\item \textbf{DEEP FM} \citep{guo2017deepfm} combines an FM component and a neural network component together to produce the final prediction. 
\item \textbf{PNN} \citep{qu2016product} includes an embedding layer to learn a representation of the categorical data and a product layer to capture interactive patterns among categories.
\item \textbf{DIN} \citep{zhou2018deep} designs a deep interest network that uses a local activation unit to adaptively learn the representation of user interests from historical behaviors.  
\item \textbf{GRU} \citep{hidasi2015session} applies a gated recurrent unit (GRU) to encode the long browsing histories of users.
\item \textbf{LSTM} \citep{graves2012long} applies a long short term memory unit (LSTM) to encode the historical browsing logs of users.
\item \textbf{MMOE} \citep{ma2018modeling} develops a multi-gate mixture-of-experts that can model the user's multiple behaviors by sharing the expert sub-models across all the behaviors.
\item \textbf{GCN} \citep{kipf2016semi} designs a graph convolutional network that learns 
representations of nodes (either users or items) by passing and aggregating their neighborhood information.
\item \textbf{GAT} \citep{velivckovic2017graph} designs a graph attention network that adopts an attention mechanism to consider the different contributions from the neighbor nodes in representing the central nodes (either users or items). 
\end{itemize}

\begin{table}
	\centering
	\caption{Results comparison of querying attribute values and items on tabular datasets.}
        \vspace{1mm}
	\resizebox{1.00\textwidth}{!}{
		\begin{tabular}{@{\extracolsep{4pt}}cccccccccccccc}
			\toprule
			\multirow{2}{*}{$\Psi_\mathtt{RE}(\cdot)$} & \multirow{2}{*}{$\Psi_\mathtt{CO}(\cdot)$} & \multicolumn{4}{c}{Amazon} & \multicolumn{4}{c}{LastFM} & \multicolumn{4}{c}{Yelp} \\
			\cmidrule{3-6}
			\cmidrule{7-10}
			\cmidrule{11-14}
			{} & {} & T@3 & S@3 & T@5 & S@5 & T@3 & S@3 & T@5 & S@5 & T@3 & S@3 & T@5 & S@5 \\
			\midrule
			\multirow{5}{*}{\makecell[c]{COLD \\ START}} & AG & 
			6.47 & 0.12 & 7.83 & 0.23 & 
			6.77 & 0.05 & 8.32 & 0.14 & 
			6.65 & 0.08 & 8.29 & 0.13 \\ 
			{} & ME & 
			6.50 & 0.12 & 8.34 & 0.16 & 
			6.84 & 0.04 & 8.56 & 0.11 & 
			6.40 & 0.15 & 8.18 & 0.20\\
                \cmidrule{2-14}
			{} & \texttt{CORE} & 
			\textbf{6.02} & \textbf{0.25} & \textbf{6.18} & \textbf{0.65} & 
			\textbf{5.84} & \textbf{0.29} & \textbf{5.72} & \textbf{0.74} & 
			\textbf{5.25} & \textbf{0.19} & \textbf{6.23} & \textbf{0.65} \\
                {} & \texttt{CORE}$^+_\mathtt{D}$ & 
			6.00 & 0.26 & 6.01 & 0.67 & 
			5.79 & 0.30 & 5.70 & 0.75 & 
			5.02 & 0.21 & 6.12 & 0.68\\
			\midrule
			\multirow{5}{*}{FM} & AG & 
			\textbf{2.76} & 0.74 & \textbf{2.97} & 0.83 & 
			4.14 & 0.52 & 4.67 & 0.64 &
			3.29 & 0.70 & 3.39 & 0.81\\
			{} & CRM & 
			4.58 & 0.28 & 6.42 & 0.38 & 
			4.23 & 0.34 & 5.87 & 0.63 & 
			4.12 & 0.25 & 6.01 & 0.69\\
			{} & EAR & 
			4.13 & 0.32 & 6.32 & 0.42 & 
			4.02 & 0.38 & 5.45 & 0.67 & 
			4.10 & 0.28 & 5.95 & 0.72\\
                \cmidrule{2-14}
			{} & \texttt{CORE} & 
			3.26 & \textbf{0.83} & 3.19 & \textbf{0.99} &
			\textbf{3.79} & \textbf{0.72} & \textbf{3.50} & \textbf{0.99} &
			\textbf{3.14} & \textbf{0.84} & \textbf{3.20} & \textbf{0.99} \\
			{} & \texttt{CORE}$^+_\mathtt{D}$ & 
			3.16 & 0.85 & 3.22 & 1.00 & 
			3.75 & 0.74 & 3.53 & 1.00 & 
			3.10 & 0.85 & 3.23 & 1.00\\
			\midrule
			\multirow{6}{*}{\makecell[c]{DEEP \\ FM}} & AG & 
			\textbf{3.07} & 0.71 & 3.27 & 0.82 & 
			3.50 & 0.68 & 3.84 & 0.79 & 
			3.09 & 0.74 & 3.11 & 0.88\\
			{} & CRM & 
			4.51 & 0.29 & 6.32 & 0.40 & 
			4.18 & 0.38 & 5.88 & 0.63 & 
			4.11 & 0.23 & 6.02 & 0.71\\
			{} & EAR & 
			4.47 & 0.30 & 6.35 & 0.43 & 
			4.01 & 0.37 & 5.43 & 0.69 & 
			4.01 & 0.32 & 5.74 & 0.75\\
                \cmidrule{2-14}
			{} & \texttt{CORE} & 
			3.23 & \textbf{0.85} & \textbf{3.22} & \textbf{0.99} &
			\textbf{3.47} & \textbf{0.81} & \textbf{3.34} & \textbf{1.00} &
			\textbf{2.98} & \textbf{0.93} & \textbf{3.11} & \textbf{1.00}\\
			{} & \texttt{CORE}$^+_\mathtt{D}$ & 
			3.16 & 0.87 & 3.21 & 1.00 & 
			3.45 & 0.83 & 3.30 & 1.00 & 
			2.97 & 0.94 & 3.10 & 1.00\\
                \midrule
			\multirow{4}{*}{PNN} & AG & 
			3.02 & 0.74 & 3.10 & 0.87 & 
			3.44 & 0.67 & 3.53 & 0.84 & 
			2.83 & 0.77 & 2.82 & 0.91\\
                \cmidrule{2-14}
			{} & \texttt{CORE} & 
			\textbf{3.01} & \textbf{0.88} & \textbf{3.04} & \textbf{0.99} &
			\textbf{3.10} & \textbf{0.87} & \textbf{3.20} & \textbf{0.99} &
			\textbf{2.75} & \textbf{0.88} & \textbf{2.76} & \textbf{1.00} \\
			{} & \texttt{CORE}$^+_\mathtt{D}$ & 
			3.00 & 0.92 & 3.04 & 1.00 & 
			3.05 & 0.88 & 3.12 & 1.00 & 
			2.74 & 0.88 & 2.76 & 1.00\\
			\bottomrule
		\end{tabular}
	}
	\label{tab:appmain}
	\vspace{-2mm}
\end{table}

We then summarize the conversational techniques, denoted as $\Psi_\mathtt{CO}(\cdot)$, used in this paper as follows.
\begin{itemize}[topsep = 0pt,itemsep=-1pt,leftmargin =10pt]
\item \textbf{AG} (Abs Greedy) always queries an item with the highest relevance score at each turn, which is equivalent to solely using the recommender system as a conversational agent. 
\item \textbf{ME} (Max Entropy) always generates a query in the attribute level.
In the setting of querying items and attributes, it queries the attribute with the maximum entropy, which can be formulated as:
\begin{equation}
\label{eqn:me}
a_\mathtt{query}=
\argmax_{x\in\mathcal{X}_{k-1}} \sum_{w_x\in\mathcal{W}_x}\Big(\frac{|\mathcal{V}_{x_\mathtt{value}=w_x}\cap\mathcal{V}_{k-1}|}{|\mathcal{V}_{k-1}|}\log \frac{|\mathcal{V}_{x_\mathtt{value}=w_x}\cap\mathcal{V}_{k-1}|}{|\mathcal{V}_{k-1}|} \Big).
\end{equation}
In the setting of querying items and attribute values, we first apply Eq.~(??????????????????\ref{eqn:me}) to obtain the chosen attribute and then we select the attribute value with the highest frequency of the chose attribute as:
\begin{equation}
a_\mathtt{query}=\argmax_{w_x\in\mathcal{W}_x}|\mathcal{V}_{x_\mathtt{value}=w_x}\cap\mathcal{V}_{k-1}|,
\end{equation}
where xx is computed following Eq.~(??????\ref{eqn:me}).
To evaluate the success rate, during the evaluation turn, we apply AG after employing ME.
\item \textbf{CRM} \citep{sun2018conversational} integrates the conversational component and the recommender component by feeding the belief tracker results to an FM-based recommendation method.
It is originally designed for the single-round setting, and we follow \citep{lei2020estimation} to extend it to the multiple-round setting.
\item \textbf{EAR} \citep{lei2020estimation} consists of three stages, i.e., the estimation stage to build predictive models to estimate user preference on both items and attributes based on an FM-based recommendation approach, the action stage to determine whether to query attributes or recommend items, the reflection stage to update the recommendation method.
\end{itemize}
The proposed methods are listed as follows.
\begin{itemize}[topsep = 0pt,itemsep=-1pt,leftmargin =10pt]
\item \textbf{\texttt{CORE}} is our proposed method calculating $\Psi_\mathtt{CG}(\cdot)$s in line~\ref{line:core} in Algorithm~\ref{algo:model}.
\item \textbf{\texttt{CORE}$^+_\mathtt{D}$} is a variant of \texttt{CORE} that computes $\Psi^\mathtt{D}_\mathtt{CG}(\cdot)$s instead of $\Psi_\mathtt{CG}(\cdot)$s, making $\Psi^\mathtt{D}_\mathtt{CG}(\cdot)$s consider the dependence among attributes.
\end{itemize}

\subsection{Implementation Details}
\label{app:implement}
For each recommendation approach, we directly follow their official implementations with the following hyper-parameter settings.
The learning rate is decreased from the initial value $1\times 10^{-2}$ to $1\times 10^{-6}$ during the training process.
The batch size is set as 100.
The weight for the L2 regularization term is $4\times 10^{-5}$.
The dropout rate is set as 0.5.
The dimension of embedding vectors is set as 64.
For those FM-based methods (i.e., FM, DEEP FM), we build a representation vector for each attribute.
We treat it as the static part of each attribute embedding, while the dynamic part is the representation of attribute values stored in the recommendation parameters.
In practice, we feed the static and dynamic parts together as a whole into the model.
After the training process, we store the static part and use it to estimate the dependence among attributes, as introduced in Appendix~\ref{app:dependence}.
All the models are trained under the same
hardware settings with 16-Core AMD Ryzen 9 5950X (2.194GHZ),
62.78GB RAM, NVIDIA GeForce RTX 3080 cards.

\section{Additional Experimental Results}
\label{app:result}

\subsection{Performance Comparisons}
\label{app:moreresult}
We conduct the experiment in two different experimental settings.
One is the setting of querying items and attributes, and the other is the setting of querying items and attribute values.
We report the results of the former setting on tabular datasets (i.e., Amazon, LastFM, Yelp) in Table~\ref{tab:appbinary}, and also report the results of the latter setting on these tabular datasets in Table~\ref{tab:appmain}. 
We also evaluate the performance of \texttt{CORE} in sequential datasets and graph-structured datasets, and report their results in Table~\ref{tab:appseq} and Table~\ref{tab:appgraph} respectively.

By combining these tables, our major findings are consistent with those shown in Section~\ref{sec:res}. 
Moreover, we also note that the performance of \texttt{CORE} in querying items and attributes is close to the oracle, and thus considering the dependence among attributes in \texttt{CORE}$^+_\mathtt{D}$ does not bring much improvement.

\begin{table}[t]
	\centering
	\caption{Results comparison of querying attribute values and items on sequential datasets.}
        \vspace{1mm}
	\resizebox{1.00\textwidth}{!}{
		\begin{tabular}{@{\extracolsep{4pt}}cccccccccccccc}
			\toprule
			\multirow{2}{*}{$\Psi_\mathtt{RE}(\cdot)$} & \multirow{2}{*}{$\Psi_\mathtt{CO}(\cdot)$} & \multicolumn{4}{c}{Taobao} & \multicolumn{4}{c}{Tmall} & \multicolumn{4}{c}{Alipay} \\
			\cmidrule{3-6}
			\cmidrule{7-10}
			\cmidrule{11-14}
			{} & {} & T@3 & S@3 & T@5 & S@5 & T@3 & S@3 & T@5 & S@5 & T@3 & S@3 & T@5 & S@5 \\
			\midrule
			\multirow{5}{*}{\makecell[c]{COLD \\ START}} & AG & 
			6.30 & 0.15 & 7.55 & 0.27 & 
			6.80 & 0.04 & 8.54 & 0.09 & 
			6.47 & 0.11 & 7.95 & 0.19 \\ 
			{} & ME & 
			6.43 & 0.14 & 7.82 & 0.29 & 
			6.76 & 0.05 & 8.50 & 0.12 & 
			6.71 & 0.07 & 8.46 & 0.11 \\
                \cmidrule{2-14}
			{} & \texttt{CORE} & 
			\textbf{5.42} & \textbf{0.39} & \textbf{5.04} & \textbf{0.89} & 
			\textbf{6.45} & \textbf{0.13} & \textbf{7.38} & \textbf{0.37} & 
			\textbf{5.98} & \textbf{0.25} & \textbf{6.17} & \textbf{0.65} \\
                {} & \texttt{CORE}$^+_\texttt{D}$ & 
			5.41 & 0.40 & 5.05 & 0.90 & 
			6.34 & 0.17 & 7.14 & 0.40 & 
			5.91 & 0.28 & 6.12 & 0.68 \\
                \midrule
			\multirow{4}{*}{FM} & AG & 
			3.03 & 0.70 & 3.17 & 0.81 & 
			3.57 & 0.58 & 4.32 & 0.61 & 
			\textbf{2.99} & 0.84 & \textbf{3.20} & 0.87 \\
                \cmidrule{2-14}
			{} & \texttt{CORE} & 
			\textbf{3.01} & \textbf{0.87} & \textbf{2.95} & \textbf{1.00} & 
			\textbf{3.53} & \textbf{0.69} & \textbf{4.14} & \textbf{0.86} & 
			3.37 & \textbf{0.90} & 3.29 & \textbf{0.97} \\
			{} & \texttt{CORE}$^+_\mathtt{D}$ & 
			3.02 & 0.88 & 2.91 & 1.00 & 
			3.50 & 0.71 & 4.11 & 0.87 & 
			3.32 & 0.91 & 3.14 & 0.97 \\
			\midrule
                \multirow{4}{*}{\makecell[c]{DEEP \\ FM}} & AG & 
			2.99 & 0.72 & 2.93 & 0.89 & 
			4.38 & 0.46 & 5.23 & 0.52 & 
			\textbf{3.03} & 0.83 & 3.22 & 0.87 \\
                \cmidrule{2-14}
			{} & \texttt{CORE} & 
			\textbf{2.73} & \textbf{0.92} & \textbf{2.78} & \textbf{0.99} & 
			\textbf{4.31} & \textbf{0.62} & \textbf{4.43} & \textbf{0.84} & 
			3.17 & \textbf{0.87} & \textbf{3.18} & \textbf{0.97} \\
			{} & \texttt{CORE}$^+_\mathtt{D}$ & 
			2.68 & 0.94 & 2.80 & 1.00 & 
			4.13 & 0.65 & 4.42 & 0.85 & 
			3.12 & 0.87 & 3.17 & 0.97 \\
			\midrule
                \multirow{4}{*}{PNN} & AG & 
			2.93 & 0.76 & 2.87 & 0.92 & 
			3.98 & 0.52 & 4.60 & 0.61 & 
			\textbf{3.18} & 0.88 & \textbf{2.94} & 0.91 \\
                \cmidrule{2-14}
			{} & \texttt{CORE} & 
			\textbf{2.51} & \textbf{0.98} & \textbf{2.64} & \textbf{1.00} & 
			\textbf{3.20} & \textbf{0.64} & \textbf{4.11} & \textbf{0.90} & 
			3.19 & 0.88 & 3.15 & \textbf{0.98} \\
			{} & \texttt{CORE}$^+_\mathtt{D}$ & 
			2.48 & 0.98 & 2.61 & 1.00 & 
			3.20 & 0.65 & 4.02 & 0.94 & 
			3.18 & 0.88 & 3.11 & 0.98 \\
                \midrule
			\multirow{4}{*}{DIN} & AG & 
			2.71 & 0.85 & 2.83 & 0.95 & 
			4.14 & 0.51 & 4.81 & 0.59 & 
			\textbf{3.10} & 0.82 & 3.35 & 0.85\\
                \cmidrule{2-14}
			{} & \texttt{CORE} & 
			\textbf{2.45} & \textbf{0.97} & \textbf{2.54} & \textbf{1.00} &
			\textbf{4.12} & \textbf{0.64} & \textbf{4.16} & \textbf{0.89} &
			3.25 & \textbf{0.83} & \textbf{3.32} & \textbf{0.96}\\
			{} & \texttt{CORE}$^+_\mathtt{D}$ & 
			2.44 & 0.97 & 2.50 & 1.00 & 
			4.10 & 0.66 & 4.12 & 0.91 & 
			3.22 & 0.85 & 3.30 & 0.97\\
                \midrule
			\multirow{4}{*}{GRU} & AG & 
			2.80 & 0.80 & 2.64 & 0.97 & 
			3.82 & 0.56 & 4.40 & 0.64 & 
			3.17 & 0.83 & 3.29 & 0.87\\
                \cmidrule{2-14}
			{} & \texttt{CORE} & 
			\textbf{2.31} & \textbf{0.98} & \textbf{2.44} & \textbf{1.00} &
			\textbf{3.81} & \textbf{0.72} & \textbf{3.91} & \textbf{0.92} &
			\textbf{3.10} & \textbf{0.84} & \textbf{3.11} & \textbf{0.96} \\
			{} & \texttt{CORE}$^+_\mathtt{D}$ & 
			2.96 & 0.99 & 2.40 & 1.00 & 
			3.78 & 0.74 & 3.90 & 0.93 & 
			3.10 & 0.84 & 3.12 & 0.95\\
                \midrule
			\multirow{4}{*}{LSTM} & AG & 
			2.60 & 0.85 & 2.52 & 0.97 & 
			4.73 & 0.41 & 5.63 & 0.49 & 
			3.43 & 0.78 & 3.27 & 0.89\\
                \cmidrule{2-14}
			{} & \texttt{CORE} & 
			\textbf{2.37} & \textbf{0.97} & \textbf{2.49} & \textbf{1.00} &
			\textbf{4.58} & \textbf{0.55} & \textbf{4.36} & \textbf{0.90} &
			 \textbf{3.03} & \textbf{0.84} & \textbf{3.16} & \textbf{0.97} \\
			{} & \texttt{CORE}$^+_\mathtt{D}$ & 
			2.30 & 0.98 & 2.49 & 1.00 & 
			4.56 & 0.57 & 4.34 & 0.91 & 
			3.05 & 0.85 & 3.18 & 0.97\\
                \midrule
                \multirow{4}{*}{MMOE} & AG & 
			3.04 & 0.75 & 2.98 & 0.92 & 
			4.10 & 0.54 & 4.56 & 0.62 & 
			3.58 & 0.83 & 3.90 & 0.92\\
                \cmidrule{2-14}
			{} & \texttt{CORE} & 
			\textbf{2.48} & \textbf{0.96} & \textbf{2.60} & \textbf{1.00} &
			\textbf{3.92} & \textbf{0.65} & \textbf{4.19} & \textbf{0.85} &
			\textbf{3.21} & \textbf{0.91} & \textbf{3.17} & \textbf{0.98} \\
			{} & \texttt{CORE}$^+_\mathtt{D}$ & 
			2.46 & 0.97 & 2.61 & 1.00 & 
			3.90 & 0.66 & 4.20 & 0.84 & 
			3.19 & 0.89 & 3.12 & 0.99\\
			\bottomrule
		\end{tabular}
	}
	\label{tab:appseq}
	\vspace{-3mm}
\end{table}

\begin{table}[t]
	\centering
	\caption{Results comparison of querying attribute values and items on graph datasets.}
        \vspace{1mm}
	\resizebox{0.75\textwidth}{!}{
		\begin{tabular}{@{\extracolsep{4pt}}cccccccccc}
			\toprule
			\multirow{2}{*}{$\Psi_\mathtt{RE}(\cdot)$} & \multirow{2}{*}{$\Psi_\mathtt{CO}(\cdot)$} & \multicolumn{4}{c}{Douban Movie} & \multicolumn{4}{c}{Douban Book} \\
			\cmidrule{3-6}
			\cmidrule{7-10}
			{} & {} & T@3 & S@3 & T@5 & S@5 & T@3 & S@3 & T@5 & S@5 \\
			\midrule
			\multirow{5}{*}{\makecell[c]{COLD \\ START}} & AG & 
			6.52 & 0.11 & 7.94 & 0.21 & 
			6.36 & 0.15 & 7.68 & 0.26 \\ 
			{} & ME & 
			6.60 & 0.10 & 8.16 & 0.21 & 
			6.40 & 0.15 & 8.04 & 0.24 \\
                \cmidrule{2-10}
			{} & \texttt{CORE} & 
			\textbf{5.48} & \textbf{0.38} & \textbf{4.84} & \textbf{0.94} & 
			\textbf{5.96} & \textbf{0.26} & \textbf{5.08} & \textbf{0.92} \\
                {} & \texttt{CORE}$^+_\texttt{D}$ & 
			5.45 & 0.40 & 4.81 & 0.94 & 
			5.91 & 0.28 & 4.98 & 0.94 \\
			\midrule
			\multirow{4}{*}{GAT} & AG & 
			3.75 & 0.63 & 3.65 & 0.87 & 
			3.56 & 0.64 & 3.41 & 0.87 \\
                \cmidrule{2-10}
			{} & \texttt{CORE} & 
			\textbf{2.89} & \textbf{0.91} & \textbf{2.97} & \textbf{1.00} & 
			\textbf{2.80} & \textbf{0.92} & \textbf{2.91} & \textbf{1.00} \\
			{} & \texttt{CORE}$^+_\mathtt{D}$ & 
			2.87 & 0.92 & 2.96 & 1.00 & 
			2.81 & 0.93 & 2.90 & 1.00 \\
                \midrule
			\multirow{4}{*}{GCN} & AG & 
			3.21 & 0.69 & 3.33 & 0.83 & 
			3.20 & 0.71 & 3.18 & 0.89 \\
                \cmidrule{2-10}
			{} & \texttt{CORE} & 
			\textbf{2.76} & \textbf{0.92} & \textbf{2.81} & \textbf{1.00} &
			\textbf{2.85} & \textbf{0.91} & \textbf{2.85} & \textbf{1.00} \\
			{} & \texttt{CORE}$^+_\mathtt{D}$ & 
			2.74 & 0.93 & 2.80 & 1.00 & 
			2.83 & 0.93 & 2.78 & 1.00 \\
			\bottomrule
		\end{tabular}
	}
	\label{tab:appgraph}
	\vspace{-3mm}
\end{table}

\subsection{Incorporating Our Conversational Agent with a Frozen Chat-bot}
\label{app:chatbot}

With the development of pre-trained large language models (LLMs), chat-bots built based on these LLMs are capable of communicating like humans, which is a powerful tool to allow our conversational agent to extract the key information from the user's free text feedback and generate free text for querying attributes and items.
Concretely, chat-bot can act as either a \emph{question generator} or a \emph{answer extractor}.
As shown in Figure~\ref{fig:chatbot}, if our conversational agent decides to query attribute \texttt{breakfast} \texttt{service}, then the command passes to the question generator to generate a free text question ``Do you require breakfast service?''
The user answers the question by free text ``I do not care about breakfast service, and I really want a hotel with shower'', and then the answer extractor extracts the user preference on the given answer, namely the user does not care about attribute \texttt{breakfast} \texttt{service} and gives positive feedback on attribute \texttt{shower}.

For this purpose, we follow a short OpenAI tutorial\footnote{\url{https://www.deeplearning.ai/short-courses/chatgpt-prompt-engineering-for-developers/}} for prompt engineering to design the following prompts based on \texttt{gpt-3.5-turbo} model.
\begin{verbatim}
# Load the API key and relevant Python libaries.
import openai
import os

def get_completion(prompt, model="gpt-3.5-turbo"):
    messages = [{"role": "user", "content": prompt}]
    response = openai.ChatCompletion.create(
        model=model,
        messages=messages,
        temperature=0, # this is the degree of randomness of the model's output
    )
    return response.choices[0].message["content"]
\end{verbatim}
We first evaluate using an LLM as a question generator by an example of generating a question to query an attribute, e.g., \texttt{breakfast} \texttt{service}.
\begin{verbatim}
# Large language model as question generator.
text = f"""
Attribute, Breakfast Service, Hotel
"""
prompt = f"""
You will be provided with text delimited by triple quotes. 
If it starts with the word "item", it denotes an item, 
then you should generate a text to recommend the item to the user.
Otherwise, it denotes an attribute,
then you should generate a text to query the user's preference on the attribute.
You should be gentle.
```{text}```
"""
response = get_completion(prompt)
print(response)
\end{verbatim}
The following is the corresponding output provided by the LLM.
\begin{verbatim}
Good day! May I ask for your preference regarding breakfast service in a hotel? 
Would you like to have a complimentary breakfast or do you prefer to have the 
option to purchase breakfast at the hotel restaurant?
\end{verbatim}
We then evaluate using an LLM as a question generator by an example of generating a question to query (i.e., recommend)  an item, e.g., \texttt{hotel} \texttt{A}.
\begin{verbatim}
# Large language model as question generator.
text = f"""
Item, Hotel A
"""
prompt = f"""
You will be provided with text delimited by triple quotes. 
If it starts with the word "item", it denotes an item, 
then you should generate a text to recommend the item to the user.
Otherwise, it denotes an attribute, 
then you should generate a text to query the user's preference on the attribute.
You should be gentle.
```{text}```
"""
response = get_completion(prompt)
print(response)
\end{verbatim}

\begin{figure}
    \centering
    \includegraphics[width=1.0\textwidth]{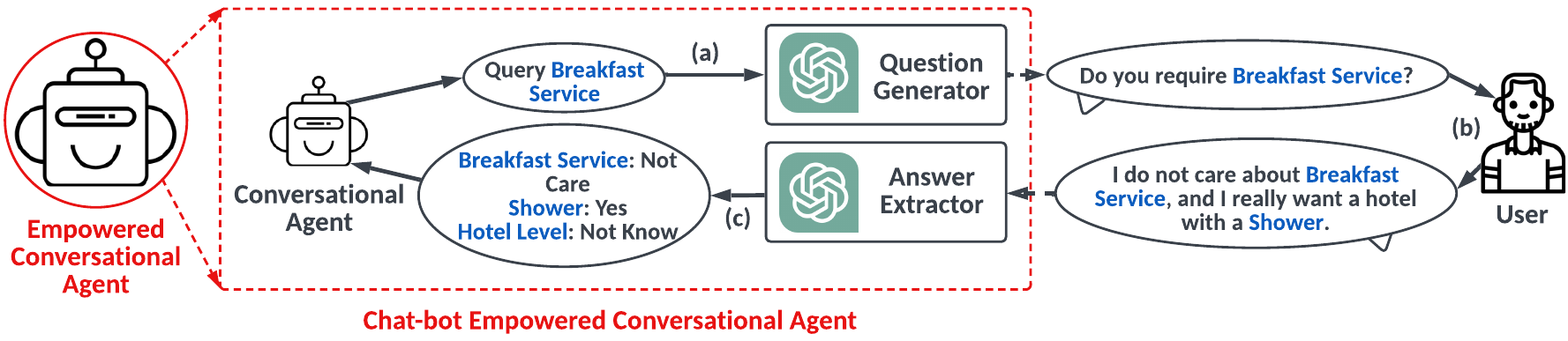}
    \vspace{-5mm}
    \caption{
        An illustrated example of empowering our conversational agent by a pre-trained chat-bot, where the red box denotes the chat-bot empowered conversational agent. 
        For this purpose, we feed the output queries generated by the original conversational agent, e.g., \texttt{Breakfast} \texttt{Service} into the question generator, as shown in (a).
        The user should input the generated question in a free-text format and provide the corresponding answer also in a free-text format, as shown in (b).
        The answer extractor would extract the key information from the user response and give them to the original conversational agent, as shown in (c). 
    }
    \label{fig:chatbot}
    \vspace{-2mm}
\end{figure}

The following is the corresponding output provided by the LLM.
\begin{verbatim}
Great choice! Hotel A is a wonderful option. Would you like me to provide more 
information about the hotel or help you book a room?
\end{verbatim}
Also, we evaluate using an LLM as an answer extractor by an example of extracting the user preference on attributes, e.g., \texttt{breakfast} \texttt{service}, \texttt{hotel} \texttt{level}, and \texttt{shower}.
\begin{verbatim}
text = f"""
I do not care about breakfast service, and I really want a hotel with a shower.
"""
prompt = f"""
You will be provided with text delimited by triple quotes. 
If you can infer the user preference on attributes,
then re-write the text in the following format:
[attribute name]: [user perference]
Attribute names include Breakfast Service, Hotel Level, and Shower.
User preference includes Yes to denote the positive preference, No to denote the 
negative preference, and Not Care to denote the user does not care.
If you can not infer the user preference on attributes,
then re-write the text in the following format:
[attribute name]: Not Know
```{text}```
"""
response = get_completion(prompt)
print(response)
\end{verbatim}
The following is the corresponding output by the LLM.
\begin{verbatim}
Breakfast Service: Not Care
Hotel Level: Not Know
Shower: Yes
\end{verbatim}

Similarly, we also can evaluate using an LLM as a question generator by an example of generating a question to query an attribute value, e.g., \texttt{Hotel} \texttt{Level}=5.
\begin{verbatim}
# Large language model as question generator.
text = f"""
Attribute, Hotel Level is 5, Hotel
"""
prompt = f"""
You will be provided with text delimited by triple quotes. 
If it starts with the word "item", it denotes an item \  
then you should generate a text to recommend the item to the user.
Otherwise, it denotes an attribute \ 
then you should generate a text to query the user's preference on the attribute.
You should be gentle.
```{text}```
"""
response = get_completion(prompt)
print(response)
\end{verbatim}
The following is the corresponding output by the LLM.
\begin{verbatim}
Excuse me, may I ask for your preference on hotel level? Would you prefer a
5-star hotel or are you open to other options?
\end{verbatim}

According to Definition~\ref{def:appinteract}, we have exemplified querying (i), (ii), and (iii) in the above examples.
We further evaluate querying (iv).
Namely, we evaluate using an LLM as a question generator by an example of generating a question to query whether the user preference is not smaller than an attribute value, e.g., \texttt{Hotel} \texttt{Level} not smaller than 3.
\begin{verbatim}
# Large language model as question generator.
text = f"""
Attribute, Hotel Level is not smaller than 3, Hotel
"""
prompt = f"""
You will be provided with text delimited by triple quotes. 
If it starts with the word "item", it denotes an item \  
then you should generate a text to recommend the item to the user.
Otherwise, it denotes an attribute \ 
then you should generate a text to query the user's preference on the attribute.
You should be gentle.
```{text}```
"""
response = get_completion(prompt)
print(response)
\end{verbatim}
The following is the corresponding output by the LLM.
\begin{verbatim}
Excuse me, may I ask for your preference on hotel level? Would you prefer a 
hotel with a level of 3 or higher?
\end{verbatim}

We note that if there are too many attribute IDs (or too many attribute values) in the use case, then it might need to further incorporate some hierarchical designs \citep{chen2021hierarchy} and ambiguous matching \citep{tamkin2022task} into the above system, which is out of the scope of this paper, and we leave it for future work.

\begin{figure}
    \centering
    \includegraphics[width=0.35\textwidth]{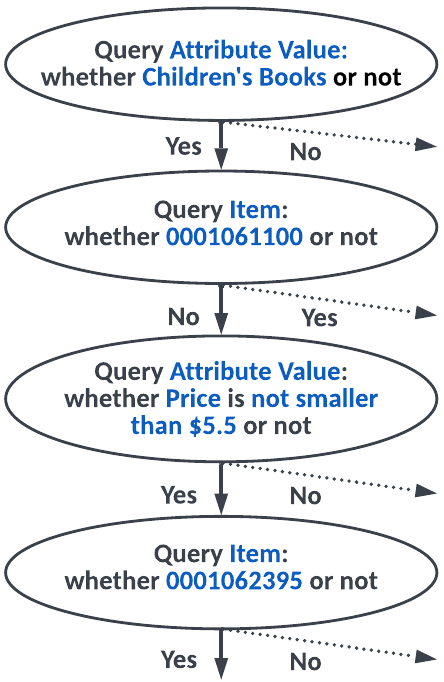}
    \caption{
        An illustrated example of an online decision tree in the setting of querying items and attribute values, where the target item is \texttt{0001062395}.  
    }
    \label{fig:tree}
    \vspace{-2mm}
\end{figure}

\subsection{Visualization and Case Study}
\label{app:visulation}
We investigate a case in Amazon dataset in the setting of querying items and attribute values, where the target item is \texttt{0001062395}.
We depict the online decision tree in Figure~\ref{fig:tree}.
From the figure, we can see that the conversational agent first queries a value of attribute \texttt{Category}, then queries (i.e., recommends) an item \texttt{000161100}; and after that, it queries another attribute, i.e., \texttt{Price}, and finally queries an item \texttt{0001062395}.

Compared to Figure~\ref{fig:overview}(b), this example seems like a chain.
The main reason is that in practice, the user would give the corresponding answer to the query at each turn.
Therefore, the binary tree (in the setting of querying items and attribute values) would reduce to a chain.

From this case, we also can observe that our conversational agent is capable of jointly considering the items and attributes to search for the target items in the session.

\section{Connections to Existing Approaches}
\label{app:related}
\subsection{Connections to Conversational Recommender Systems}
Bridging recently emerged conversational techniques and recommender systems becomes an appealing solution to model the dynamic preference and weak explainability problems in recommendation task  
\citep{gao2021advances,jannach2021survey}, where
the core sub-task is to dynamically select attributes to query and make recommendations upon the corresponding answers.
Along this line, one popular direction is to build a conversational recommender system, which combines the conversational models and the recommendation models from a systematic perspective.
In other words, these models are treated and learned as two individual modules \citep{lei2020estimation,bi2019conversational,sun2018conversational,zhang2020conversational}.
For example, compared to previous literature \citep{christakopoulou2018q,christakopoulou2016towards,yu2019visual}, recent work \citep{zhang2018towards} builds the systems upon the multiple turn scenarios; unfortunately, it does not investigate when to query attributes and when to make recommendations (i.e., query items).
To solve this issue, prior works \citep{lei2020estimation,li2018towards} develop reinforcement learning-based solutions.
However, all these previous methods based on a reinforcement learning framework are innately suffering from insufficient usage of labeled data and high complexity costs of deployment.

Instead, \texttt{CORE} can be seamlessly adopted to any recommendation method (especially those widely adopted supervised learning-based recommendation methods), and is easy to implement due to our conversational strategy based on the uncertainty minimization theory.

\subsection{Connections to (Offline) Decision Tree Algorithms}
Decision tree algorithms \citep{hssina2014comparative} such as ID3 and C4.5 were proposed based on information theory, which measures the \emph{uncertainty} in each status by calculating its entropy.
If we want to directly adopt the entropy measurement for the conversational agent, then one possible definition of entropy is
\begin{equation}
\mathtt{H}_k=-\sum_{y\in\mathcal{Y}}\Big(\frac{|\mathcal{V}_{y_\mathtt{value}=y}\cap\mathcal{V}_{k}|}{|\mathcal{V}_k|}\log \frac{|\mathcal{V}_{y_\mathtt{value}=y}\cap\mathcal{V}_k|}{|\mathcal{V}_k|}\Big),
\end{equation}
where $\mathtt{H}_k$ is the empirical entropy for $k$-th turn, $\mathcal{Y}$ is the set of all the labels, and $\mathcal{V}_{y_\mathtt{value}=y}$ is the set of items 
whose label is $y$.
For convenience, we call this traditional decision tree as \emph{offline decision tree}.

The main difference between the previous offline decision tree and our online decision tree lies in that \emph{our online decision tree algorithm does not have labels to measure the ``uncertainty'', instead, we have access to the estimated relevance scores given by recommender systems.}
We also note that directly using the user's previous behaviors as the labels would lead to a sub-optimal solution, because (i) offline labels in collected data are often biased and can only cover a small number of candidate items, and (ii) offline labels only can reflect the user's previous interests, but the user's preferences are always shifting.

To this end, we measure the \emph{uncertainty} in terms of the summation of the estimated relevance scores of all the unchecked items after previous $(k-1)$ interactions.
Formally, we define our uncertainty as:
\begin{equation}
\mathtt{U}_k = \mathtt{SUM}(\{\Psi_\mathtt{RE}(v_m)|v_m\in\mathcal{V}_{k}\}),
\end{equation}
where $\Psi_\mathtt{RE}(\cdot)$ denotes the recommender system.
Similar to the \emph{information gain} in the offline decision tree, we then derive the definition of \emph{certainty gain} (as described in Definition~\ref{def:uncertainty}), and formulate the conversational agent into an uncertainty minimization framework.  

\subsection{Connections to Recommendation Approaches to Addressing Cold-start Issue}
Cold-start issues are situations where no previous events, e.g., ratings, are known for certain users or items \citep{lam2008addressing,schein2002methods}.
Commonly, previous investigations have revealed that the more (side) information, the better the recommendation results.
In light of this, roughly speaking, there are two main branches to address the cold-start problem.
One direction is to combine the content information into collaborative filtering to perform a hybrid recommendation \citep{agarwal2009regression,gunawardana2008tied,park2009pairwise,saveski2014item}, and a recent advance \citep{mirbakhsh2015improving,zhao2020catn} proposes to further combine the cross-domain information to the recommender system.
The other direction is to incorporate an active learning strategy into the recommender system \citep{elahi2016survey,rubens2015active}, whose target is to select items for the newly-signed users to rate.
For this purpose, representative methods include the popularity strategy \citep{golbandi2011adaptive}, the coverage strategy \citep{golbandi2010bootstrapping}, and the uncertainty reduction strategy \citep{rubens2007influence}, where the first one selects items that have been frequently rated by users, the second one selects items that have been highly co-rated with other items, and the third one selects items that can help the recommender system to better learn the user preference.

The main reason that our plugging the conversational agent into the recommender system could address the cold-start issue, also can be explained as querying more information from users. 
The major difference between our conversational agent's strategy and the above active learning strategies is that our goal is not to gain a better recommender system but to meet the user's needs in the current session.
Therefore, in our offline-training and online-checking paradigm, the items receiving the high estimated relevance scores are ``uncertain'' ones, which is pretty different from previous settings (where the uncertainty is usually estimated independently of the relevance estimation).

\end{document}